\documentclass[preprint,12pt]{elsarticle}

\usepackage{amssymb}
\usepackage{color}
\usepackage{bm}
\usepackage{comment}
\usepackage{amsmath,amssymb,amsfonts,amsthm}
\usepackage{mathrsfs}
\usepackage{bm}
\usepackage{soul}
\usepackage{array}
\usepackage{dcolumn}
\usepackage{physics}
\usepackage{hyperref}

\newtheorem{lemma}{Lemma}

\newtheorem{corollary}{Corollary}

\DeclareMathOperator{\argmin}{argmin}


\begin{document}

\begin{frontmatter}

\title{Probabilistic forecast of nonlinear dynamical systems with uncertainty quantification}

\author[inst1]{Mengyang Gu\footnote{Equal contributions between the first two authors. Corresponding authors: Mengyang Gu (\href{mailto:mengyang@pstat.ucsb.edu}{mengyang@pstat.ucsb.edu}) and Diana Qiu (\href{mailto:diana.qiu@yale.edu}{diana.qiu@yale.edu}). }}

\affiliation[inst1]{organization={Department of Statistics and Applied Probability},
            addressline={University of California}, 
            city={Santa Barbara},
            postcode={93106}, 
            state={California},
            country={USA}}

\author[inst1]{Yizi Lin}

\author[inst2]{Victor Chang Lee}

\author[inst2]{Diana Y. Qiu}

\affiliation[inst2]{organization={Department of Mechanical Engineering and Materials Sciences},
            addressline={Yale University}, 
            city={New Haven},
            postcode={06520}, 
            state={Connecticut},
            country={USA}}

\begin{abstract}
Data-driven modeling is useful for reconstructing nonlinear dynamical systems when the underlying process is unknown or too expensive to compute.  
Having reliable uncertainty assessment of the forecast enables tools to be deployed to predict new scenarios unobserved before. In this work, 
we first extend parallel partial Gaussian processes for predicting the vector-valued transition function that links the observations between the current and next time points, and quantify the uncertainty of predictions by posterior sampling. Second, we show the equivalence between the dynamic mode decomposition and the maximum likelihood estimator of the linear mapping matrix in the linear state space model. The connection provides a {probabilistic generative} model of dynamic mode decomposition and thus, uncertainty of predictions can be obtained. 
Furthermore, we draw close connections between different data-driven models for approximating nonlinear dynamics, 
through a unified view of {generative} models. 
We study two numerical examples, where the inputs of the dynamics are assumed to be known in the first example and the inputs are unknown in the second example. The examples indicate that uncertainty of forecast can be properly quantified, whereas model or input misspecification can degrade the accuracy of uncertainty quantification.  

\end{abstract}

\begin{keyword}
Bayesian priors \sep  Generative models \sep Dynamic mode decomposition \sep Forecast   \sep Gaussian processes  \sep Uncertainty quantification 
\MSC 62F15 \sep 62G08 \sep 62M20 \sep 62M40 \sep 74H15
\end{keyword}

\end{frontmatter}


\section{Introduction}
\label{sec:intro}

Dynamical systems are ubiquitously used for describing natural phenomena, such as passive motions driven by thermodynamics \cite{coffey2012langevin} and phase transition from flocking \cite{vicsek1995novel,toner1998flocks}, and social behaviors, such as epidemiological processes \cite{hethcote2000mathematics}. As mathematical models typically contain unknown parameters, observations are often used for calibrating the models and filtering the noises to estimate the latent state of the dynamical system. Kalman filter \cite{kalman1960new} and  Rauch–Tung–Striebel smoother \cite{rauch1965maximum}, for instance, produce fast estimation of latent states for linear dynamical systems with additive Gaussian fluctuations and noises, at a computational cost linearly increasing with the number of time points. When dynamical systems are nonlinear or non-Gaussian, approximate approaches, such as extended Kalman filter \cite{julier2004unscented}, 
particle filters \cite{kitagawa1996monte} and ensemble Kalman filter \cite{evensen1994sequential}, were developed to approximate the posterior distributions of latent states. However,  these approaches require underlying  {data-generating} models to be known, whereas models that exactly reproduce the reality may be unavailable or too costly to compute in some applications.

Data-driven approaches become useful for estimating dynamical systems when the true {data-generating} mechanism is unknown.  For instance,  orthogonal basis is estimated in proper orthogonal decomposition \cite{sirovich1987turbulence,berkooz1993proper} to reconstruct the covariance between each of the output coordinates by treating temporal observations as independent measurements. 
Dynamic mode decomposition  \cite{schmid2010dynamic,tu2014dmd} reconstructs the output vector through linearizing the one-step-ahead transition operator between the input and output pairs, where the eigenpairs of the linear mapping matrix produce a finite-dimensional approximation of the Koopman modes and eigenvalues   
\cite{rowley2009spectral,Brunton_koopman_siam_review_2022}. Extensive variants of Koopman operator have been proposed, such as utilizing longer temporal lag of observations through Hankel method or higher order dynamic mode decomposition \cite{le2017higher}, and utilizing nonlinear basis functions for lifting the process by the extended dynamic mode decomposition \cite{williams2015data}. A few recent techniques, such as sparse regression \cite{kaiser2018sparse,doi:10.1073/pnas.1517384113}, model predictive control \cite{korda2018linear}, and Koopman eigenfunctions \cite{folkestad2020extended}, were studied for designing the nonlinear basis and estimating the lifted state in extended dynamic mode decomposition. {Other nonlinear methods employ neural networks  to model differential equations \cite{chen2018neural,kovachki2023neural,li2021fourier}.}
 The uncertainty of the estimation by the dynamic mode decomposition and other machine learning approaches, however, may not be available, as the {generative} models were not well-studied.

Deploying data-driven models to forecast or extrapolate the input space requires reliable uncertainty assessments of predictions. 
We evaluate the precision of uncertainty quantification by the percentage of held-out observations covered by the  $1-\alpha$ predictive intervals, with $0<\alpha<1$. An efficient approach should have around $1-\alpha$ of the held-out samples covered by a short predictive interval. Gaussian process{es} have been applied in emulating expensive computer simulations \cite{sacks1989design,bayarri2007framework,li2022efficient},  and  calibrating computer models \cite{kennedy2001bayesian,zhao2022bayesian,chang2022ice}. 
However, uncertainty assessment of these approaches is typically assessed for interpolating physical input space, while {forecast or extrapolation} is required in various real-world applications, such as optimizing chemical reaction conditions through Bayesian optimization  \cite{shields2021bayesian} and controlling the predictive error by active learning \cite{fang2022reliable}.

The goal of this paper is to quantify the uncertainty from probabilistic forecasts by different approaches. Our contributions are three-fold. First, we extend a recent approach, called the parallel partial Gaussian process (PP-GP) \cite{gu2016parallel}, 
to forecast dynamical systems with multivariate outputs through predicting the one-step-ahead transition function, and the uncertainty of the forecast can be assessed through posterior sampling. 
 {Predicting the one-step-ahead transition function with a finite-dimensional space allows one to transform the forecast problem to an interpolation problem, and under regularity conditions, Gaussian process regression converges to the truth with a minimax rate  \cite{van2008rates}. Furthermore, the assessed uncertainty from the PP-GP can alert when the forecast becomes less accurate, which allows for timely interventions to control predictive error. 
The  PP-GP is particularly suitable for dynamical systems with massive outputs, as the computational complexity of the PP-GP is linear to the number of outputs at each time point.
}
 Second, we introduce a general two-step approach to derive a {probabilistic generative} model. Based on this approach, we show the estimation of dynamic mode decomposition is equivalent to the maximum likelihood estimator of a linear mapping matrix in a linear state space model, which produces uncertainty assessment from the sampling model. Third, we draw connections between different approaches, including Gaussian processes, proper orthogonal decomposition and dynamic mode decomposition. 
These connections allow one to examine the inherent {generative models} of different approaches, and to develop a suitable predictive model for real-world problems. 

We compare the approaches for forecasting and uncertainty quantification by two numerical examples. In the first example, we assume the inputs of the process are known, whereas we do not have prior knowledge of the functional form of the process. Hence we cannot use the exact form of the function to form the nonlinear basis. Rather we aim to test default or generic kernels or basis, which can be used for other scenarios. We test this scenario by the 
Lorenz 96 system \cite{lorenz1996predictability}, a benchmark approach of modeling atmospheric quantities at equally spaced locations along a cycle that induces chaotic behaviors.  The PP-GP  can detect the time when the predictive error becomes large, based on its internal uncertainty assessment of the forecast. 

In the second example, we do not assume the true inputs are known, which is unconventional in designing data-driven approaches, but not uncommon in practice \cite{gu2023data}.   
We consider one of the most challenging problems in condensed matter physics: simulating quantum many-body systems far from equilibrium. Many problems in quantum dynamics,  such as the motion of atoms or charge carriers, cannot be modeled by well-established equilibrium methods, including density functional theory (DFT) for the electronic ground state or the GW plus  Bethe–Salpeter equation \cite{Hybertsen1986, Rohlfing2000} which describes excited-state properties in the equilibrium linear response regime. This limits, for example, the understanding of systems under irradiation by ultrafast or intense laser pulses 
for obtaining information about the electronic structure of a system~\cite{sangalli2016nonequilibrium}. Therefore, nonequilibrium simulations that describe how a system responds to an external perturbation and how it evolves from one configuration to another under these circumstances are crucial for a complete understanding of electronic and optical properties of molecules and solids. 

A rigorous approach to simulating materials' nonequilibrium dynamics lies in propagating the nonequilibrium Green's function as a two-point correlator of the creation and annihilation field operators on the Keldysh contour~\cite{Keldysh1965,KB1962, Stefanucci2013}. This approach has been recently applied to compute various nonlinear and nonequilibrium optical responses from first principles in the adiabatic limit, which limits the time-evolution to a single average time, neglecting memory effects~\cite{Attaccalite2011,chan2021giantexciton, Perfetto2016,perfetto2019nonequilibrium, Perfetto2020}. However, even in the adiabatic approximation, the numerical evaluation is far from trivial, requiring millions of CPU hours for systems of only a few atoms. Thus, models that can forecast the time-evolution without the need of the simulator are urgently needed. 
Recent work \cite{yin2022using,yin2023analyzing} uses dynamic mode decomposition to approximate the Green's functions where the system is assumed to start from a known non-interacting state, 
and it is driven by an arbitrary external electromagnetic field. 
Different representations of the Green's function encode spectroscopic information, which may be measured in experiment{s}.  In this work, inputs such as the external field and many-electron interactions are not used in constructing data-driven models to test uncertainty quantification of forecast for the scenario when inputs are misspecified.

The article is organized {as follows}. In Section \ref{sec:prob_forecast}, we extended the PP-GP for forecasting nonlinear dynamical processes. We introduce {a general approach to derive a {generative} model, and apply this approach to derive a sampling model of the dynamic mode decomposition and its predictive distribution in Section \ref{sec:dmd_variants}}.  PP-GP, dynamic mode decomposition and proper orthogonal decomposition are compared in Section \ref{sec:connection}, focusing on {generative} models of these approaches. In Section \ref{sec:numeric}, we numerically compare different approaches for forecast, and discuss the scenarios when reliable forecast can be constructed even at reasonably long trajectory.
We conclude this study and outline future directions in Section \ref{sec:conclusion}. The data and code used in this work are publicly available (\url{https://github.com/UncertaintyQuantification/forecast_dynamical_systems}). 

\section{Probabilistic forecast and uncertainty quantification through parallel partial Gaussian processes}
\label{sec:prob_forecast}

The parallel partial Gaussian process (PP-GP) emulator was originally designed as a fast surrogate model to approximate computationally expensive computer models with massive observations \cite{gu2016parallel}. Emulating computer models typically starts with running the computer simulation at a set of `space-filling' designs, such as the Latin hypercube designs \cite{sacks1989design}, for building the emulator. For any other inputs untested before, the predictive distribution of the emulator is used for predictions and quantifying the uncertainty of the predictions. Most of the computer model emulation tasks deal with \textit{interpolation} for a design space, meaning that the distance between the test input and some training inputs is close, as the `space-filling' inputs fill the input space. However, many scientific tasks, such as designing a new molecule or forecasting dynamical systems, inevitably require  \textit{extrapolation} from the existing design space, where reliable uncertainty quantification of the predictions is needed. Here we extend the PP-GP model for forecasting nonlinear dynamical systems that enables uncertainty of the forecast to be quantified in a probabilistic way, which was not studied before.  

Suppose we have collected $n$ vectors of real-valued outputs or snapshots, each having $m$ dimensions, where the $t$th output vector is denoted as $\mathbf y(\mathbf x_t)=(y_1(\mathbf x_t),..., y_m(\mathbf x_t))^T$, with $\mathbf x_t$ being a $p$-dimensional input that contains the observations in the prior time points and additional physics input, for $t=1,...,n$.  When the observational vector in the prior time point is used as the input, we have $\mathbf x_t=\mathbf y(\mathbf x_{t-1})$ and thus $p=m$. In general, the dimensions of the input and output vectors can be different. 

In the PP-GP model, we assume a distinct mean parameter $\mu_j$ and variance parameter $\sigma^2_j$ at each coordinate of the output $y_j(\cdot)$, for $j=1,...,m$, which makes it flexible to capture the scale difference in output coordinates. The correlation of the output at two coordinates {$j$ and $j'$}, on the other hand, is assumed to be the same, i.e. $\mbox{Cor}( y_j(\mathbf x),  y_j(\mathbf x'))=\mbox{Cor}( y_{j'}(\mathbf x),  y_{j'}(\mathbf x'))$, which greatly simplifies the computation.
The observations from dynamical system may contain noises, due to numerical or measurement errors. Thus, for any input $\mathbf x$,  we define the PP-GP model of the $j$th coordinate below: 
\begin{equation}
y_j(\mathbf x)=f_j(\mathbf x)+ \epsilon_{j},
\end{equation}
where $\mathbf x$ is p-dimensional input variable, $f_j(\cdot)$ follows a Gaussian process prior with mean $\mu_j$ and covariance $\sigma^2_jK(\cdot,\cdot)$, and $\epsilon_{j}$ is an independent Gaussian noise with variance $\eta \sigma^2_j$.  For any $p\times n$ input matrix $\mathbf X=[\mathbf x_1,...,\mathbf x_n]$, integrating the latent Gaussian process $f_j(\cdot)$, the marginal distribution $\mathbf y_j=[y_j(\mathbf x_1),...,y_j(\mathbf x_n)]^T$ follows a multivariate normal distribution: 
\begin{equation}
(\mathbf y_j \mid \mathbf X, \mu_j, \sigma^2_j, \eta, \bm \gamma )  \sim \mathcal{MN}\left( \mu_j \mathbf 1_n, \sigma^2_j \tilde {\mathbf K}\right),
\end{equation}
where $\mathcal{MN}$ denotes the multivariate normal distribution,  $\mu_j$ is an unknown mean parameter,   $\tilde {\mathbf K}=(\mathbf K +\eta\mathbf I_n)$ with $\mathbf K$ being a correlation matrix and  {$\eta$ being a nugget parameter  due to noises in observations}. The $(t,t')$th entry of $\mathbf K$ follows  $K_{t,t'}=K(\mathbf x_t,\mathbf x_{t'})$ with $K(\cdot,\cdot)$ being a kernel function containing a {$\tilde p$-dimensional} 
  range parameter $\bm \gamma$,  and $\mathbf I_n$ denotes an identity matrix. {With a nugget parameter, the prediction of GP will be dragged towards to the mean compared to a noise-free GP \cite{andrianakis2012effect}.}  Additional trend or mean basis functions can be included in the mean of the PP-GP model. 

Frequently used covariance functions include isotropic covariance and product covariance \cite{sacks1989design}. The isotropic covariance is a function of Euclidean distance between two inputs:  $d=||\mathbf x-\mathbf x'||$ with $||\cdot||$ denoting the $L_2$ norm. For instance, the isotropic power exponential covariance function follows 
\begin{equation}
\sigma^2_j K(d)= \sigma^2_j \exp\left(-\frac{d^{\alpha}}{\gamma}\right),
\end{equation}
where the roughness parameter $0<\alpha\leq 2$ is typically held fixed and $\gamma$ is a positive range parameter controlling the correlation length, which is often estimated by data. Here {the number of range parameters is 1}, i.e. $\tilde p=1$.  

Another widely used isotropic covariance is the Mat{\'e}rn covariance \cite{handcock1993bayesian}:  
 \begin{equation}
    \sigma^2_j K(d)= \sigma^2_j \frac{2^{1-\alpha}}{\Gamma(\alpha)}\left(\frac{\sqrt{2\alpha}d}{\gamma} \right)^{\alpha} \mathcal{K}_\alpha\left(\frac{\sqrt{2\alpha} d}{\gamma} \right) , 
 \end{equation}
where $d=||\mathbf x-\mathbf x'||$ is the distance between inputs, $\Gamma(\cdot)$ is the gamma function and $\mathcal{K}_{\alpha}(\cdot)$ is the modified Bessel function of the second kind with a positive parameter $\alpha$. The Mat{\' e}rn covariance has a closed-form expression with an integer roughness parameter $\alpha=\frac{2 z+1}{2}$ with $z \in \mathbb{N}$. When $\alpha=2.5$, for example, the Mat{\' e}rn covariance function has the following expression 
\begin{equation}
\begin{aligned}
  \sigma^{2}_j K(d) = \sigma^{2}_j \left(1+\frac{\sqrt{5} d}{\gamma}+ \frac{5 d^2}{3 \gamma^2}\right) \exp \left(-\frac{\sqrt{5} d}{\gamma}\right).
    \label{eq:matern_5_2_kernel}
\end{aligned}
\end{equation} 
The GP model having a Mat{\'e}rn covariance with a roughness parameter $\alpha$ is $\lfloor\alpha-1 \rfloor$ mean squared differentiable, an appealing property as the smoothness of the process is directly controlled by the roughness parameter $\alpha$. 

When the input variables have different scales, a product correlation function is more frequently used, as it allows one to have a distinct correlation length parameter for each of the input coordinates: 
\begin{equation}
 K(\mathbf x, \mathbf x')=\prod^p_{l=1} K_l(x_l,x'_l ),
\label{equ:product_kernel}
\end{equation}
where $K_l(x_l,x'_l )$ is a kernel function,  such as power exponential kernel or Mat{\'e}rn kernel, with range parameter $\gamma_l$ for $l=1,...,p$ and we have $\tilde p=p$ range parameters. The product form of the kernel in Eq.~(\ref{equ:product_kernel}) is widely used for computer model emulation \cite{Bayarri09,conti2010bayesian,anderson2019magma} and often treated as the default setting in statistical emulator software packages \cite{roustant2012dicekriging,gu2018robustgasp}, as inputs variable can contain completely different scales and physical meanings, thus requiring distinct correlation lengthscales.  In practice, isotropic covariance may be used when  Euclidean distance is meaningful for characterizing the distance between two inputs. The product covariance may yield better predictive performance, whereas more range parameters are needed to be estimated.

In the PP-GP model, we have $m$ mean parameters $\bm \mu=(\mu_1,...,\mu_m)^T $, $m$ variance parameters $\bm \sigma^2=(\sigma^2_1,...,\sigma^2_m)^T$, $\tilde p$ covariance range parameters $\bm \gamma=(\gamma_1,...,\gamma_{\tilde p})^T$ and a nugget parameter $\eta$. We follow the Bayesian procedure to define the prior of parameters. As most of the parameters can be integrated out explicitly, meaning that the uncertainty from the estimates of these parameters is quantified during the Bayesian inference with almost no extra computational cost,  whereas a plug-in estimator of the parameters may ignore the uncertainty from parameter estimation. 
We assume an objective Bayesian prior \cite{gu2016parallel,berger2001objective} of the parameters below 
\begin{equation}
 \pi(\bm \mu,\bm \sigma^2, \bm \gamma, \eta ) \propto \frac{\pi(\bm \gamma, \eta)}{\prod^m_{j=1}\sigma^2_{j}}, 
 \label{equ:ref_prior_mean_var}
 \end{equation}
where $\pi(\bm \gamma, \eta)$ is a prior of the range and nugget parameters. Denote the $m\times n$  matrix of observations  by $\mathbf Y=[\mathbf y(\mathbf x_1),...,\mathbf y(\mathbf x_n)]$. Integrating out the mean and variance parameters, the predictive distribution for  $\mathbf x_{t^*}$ follows a noncentral Students' t distribution with $n-1$ degrees of freedom \cite{gu2016parallel}: 
\begin{equation}
    \left(y_j\left(\mathbf x_{t^*}\right) \mid \mathbf X, \mathbf Y, \mathbf x_{t^*}, \bm \gamma, \eta\right) \sim \mathcal T\left(\hat{y}_j\left(\mathbf x_{t^*}\right), \hat{\sigma}_j^2 K^{*}, n-1\right),
    \label{equ:pred_distr}
\end{equation}
where the predictive mean and scale parameters follow 
\begin{align}
\hat{y}_j\left(\mathbf x_{t^*}\right)&=  \hat{\mu}_j+\mathbf{k}^T(\mathbf x_{t^*}) \mathbf {\tilde K}^{-1}\left(\mathbf y_j-\hat{\mu}_j \mathbf{1}_n\right), \label{equ:hat_y_j}\\
\hat{\sigma}_j^2&=  \frac{1}{n-1}\left(\mathbf{y}_j- \hat{\mu}_j\mathbf{1}_n\right)^T {\mathbf{\tilde K}}^{-1}\left(\mathbf{y}_j- \hat{\mu}_j\mathbf{1}_n\right), \label{equ:sigma_j_2} \\
 K^{*}&= 1+\eta -\mathbf{k}^T(\mathbf x_{t^*}) {\mathbf{\tilde K}}^{-1} \mathbf {k}(\mathbf x_{t^*}) +\frac{\left(1-\mathbf{1}_n^T {\mathbf{\tilde K}}^{-1} \mathbf{k}(\mathbf x_{t^*})\right)^2}{\mathbf{1}_n^T {\mathbf{\tilde K}}^{-1} \mathbf{1}_n},
\end{align} 
with ${\mathbf{\tilde K}}=\mathbf K +\eta \mathbf I_n$,  $\hat \mu_j=\left(\mathbf{1}_n^T {\mathbf{\tilde K}}^{-1} \mathbf{1}_n\right)^{-1}\mathbf{1}_n^T {\mathbf{\tilde K}}^{-1} \mathbf y_j$ being the generalized least square estimator of the mean, $\mathbf 1_n$ is an n-vector of ones and $\mathbf k(\mathbf x_{t^*})=(K(\mathbf x_1,\mathbf x_{t^*}),..., K(\mathbf x_n,\mathbf x_{t^*}))^T$ being an $n$-vector of the covariance between the training inputs and the test input. 

The PP-GP model has been implemented in different computational platforms such as MATLAB,  Python and R \cite{gu2018robustgasp}.   
 The predictive mean from distribution in Eq.~(\ref{equ:pred_distr}) is typically used for prediction. The uncertainty of the prediction can be quantified by the predictive interval. 
 {A few recent studies approximate the transition operator of dynamical systems through kernel flows  \cite{hamzi2021simple, hamzi2021learning}}. In comparison, the PP-GP provides a distinct mean and variance for each coordinate, and these parameters are integrated out for calculating the predictive distribution in Eq. (\ref{equ:pred_distr}), making the model more flexible in uncertainty assessment.  We will discuss the computational issue and compare PP-GP with other vector-valued GP approaches in Section \ref{subsec:pp-gp-computation}.

\subsection{Predictions as weighted averages of basis functions and output vectors}
The predictive mean or median $\hat{y}_j(\mathbf x_{t^*})$ is often used for one-step-ahead prediction of output coordinate $j$ for any test input $\mathbf x_{t^*}$, for $j=1,...,m$. The corollary below shows that the prediction of the PP-GP model can be written as a weighted average of the observations and kernel functions. 

\begin{corollary}
The predictive mean vector $\mathbf{\hat y}(\mathbf x_{t^*})=({\hat y}_1(\mathbf x_{t^*}),...,{\hat y}_m(\mathbf x_{t^*}))^T$ from  Eq.~(\ref{equ:hat_y_j}) follows 

\begin{equation}
 \mathbf{\hat y}(\mathbf x_{t^*})= \mathbf Y \mathbf{ v}^T= \bm{\hat \mu}+ \mathbf { W}  \mathbf k(\mathbf x_{t^*}),
 \label{equ:weighted_average_pred_mean}
 \end{equation} 
where $\mathbf v $ is an $n$-dimensional row vector and $\mathbf{ W}=[\mathbf{ w}^T_1,...,\mathbf{ w}^T_m]^T$ is a $m \times n$ matrix with $\mathbf w_j$ being an $n$-dimensional row vector defined below: 
\begin{align}
     \mathbf{ v} &=\frac{(1-\mathbf k^T(\mathbf x_{t^*}) \mathbf{\tilde K}^{-1}\mathbf 1_n) \mathbf{1}_n^T {\mathbf{\tilde K}}^{-1}}{\left(\mathbf{1}_n^T {\mathbf{\tilde K}}^{-1} \mathbf{1}_n\right)}+\mathbf k^T(\mathbf x_{t^*}) \mathbf{\tilde K}^{-1}, \\
     \mathbf{ w}_j &= \left(\mathbf{y}_j- \hat{\mu}_j\mathbf{1}_n\right)^T \mathbf{\tilde K}^{-1}, 
\end{align}
 for $j=1,...,m$. 
\label{corollary:pred_mean}
\end{corollary}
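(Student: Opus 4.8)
The plan is to verify the two equalities in Eq.~(\ref{equ:weighted_average_pred_mean}) one coordinate at a time, starting from the scalar predictive mean in Eq.~(\ref{equ:hat_y_j}) and reorganizing it into the two claimed forms by elementary linear algebra. The key bookkeeping observation is that the $j$th row of the $m\times n$ matrix $\mathbf Y=[\mathbf y(\mathbf x_1),\ldots,\mathbf y(\mathbf x_n)]$ is exactly $\mathbf y_j^T=(y_j(\mathbf x_1),\ldots,y_j(\mathbf x_n))$, so that $(\mathbf Y\mathbf v^T)_j=\mathbf y_j^T\mathbf v^T=\mathbf v\mathbf y_j$ and $(\bm{\hat\mu}+\mathbf W\mathbf k(\mathbf x_{t^*}))_j=\hat\mu_j+\mathbf w_j\mathbf k(\mathbf x_{t^*})$. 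It therefore suffices to show that each of these two scalars equals $\hat y_j(\mathbf x_{t^*})$ for every $j=1,\ldots,m$, after which stacking over $j$ gives the matrix identities.

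For the second form, I would note that since $\mathbf{\tilde K}=\mathbf K+\eta\mathbf I_n$ is symmetric, so is $\mathbf{\tilde K}^{-1}$, hence $\mathbf w_j\mathbf k(\mathbf x_{t^*})=(\mathbf y_j-\hat\mu_j\mathbf 1_n)^T\mathbf{\tilde K}^{-1}\mathbf k(\mathbf x_{t^*})=\mathbf k^T(\mathbf x_{t^*})\mathbf{\tilde K}^{-1}(\mathbf y_j-\hat\mu_j\mathbf 1_n)$, and adding $\hat\mu_j$ recovers Eq.~(\ref{equ:hat_y_j}) verbatim. For the first form, I would expand Eq.~(\ref{equ:hat_y_j}) as $\hat y_j(\mathbf x_{t^*})=\hat\mu_j\bigl(1-\mathbf k^T(\mathbf x_{t^*})\mathbf{\tilde K}^{-1}\mathbf 1_n\bigr)+\mathbf k^T(\mathbf x_{t^*})\mathbf{\tilde K}^{-1}\mathbf y_j$, then substitute the generalized least squares expression $\hat\mu_j=(\mathbf 1_n^T\mathbf{\tilde K}^{-1}\mathbf 1_n)^{-1}\mathbf 1_n^T\mathbf{\tilde K}^{-1}\mathbf y_j$ into the first term. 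Factoring the common $\mathbf y_j$ out on the right leaves precisely the row vector $\mathbf v$ of the statement multiplying $\mathbf y_j$, i.e. $\hat y_j(\mathbf x_{t^*})=\mathbf v\mathbf y_j$.

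There is no substantive obstacle here: the result is a rearrangement of the closed-form PP-GP predictor, so the argument is purely mechanical. The only points requiring mild care are (i) maintaining the convention that $\mathbf v$ and the $\mathbf w_j$ are row vectors while $\mathbf k(\mathbf x_{t^*})$, $\mathbf 1_n$ and $\mathbf y_j$ are column vectors, so that all displayed products are conformable, and (ii) correctly distinguishing rows from columns of $\mathbf Y$, whose columns are indexed by time points and whose rows are indexed by output coordinates. Once these conventions are fixed, both equalities follow by direct substitution.
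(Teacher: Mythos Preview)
Your proposal is correct. The paper in fact states this corollary without proof, treating both equalities as immediate rearrangements of Eq.~(\ref{equ:hat_y_j}) together with the generalized least squares formula for $\hat\mu_j$; your write-up supplies exactly the coordinate-wise verification and row/column bookkeeping that the paper leaves implicit, so there is nothing to add or correct.
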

From Corollary  \ref{corollary:pred_mean}, the prediction of a test input at the $j$th coordinate of the output from the PP-GP model can be written as a weighted average of the observations at the $j$th coordinate, $\hat{y}_j(\mathbf x_{t^*})=\mathbf{v} \mathbf y_j$. Furthermore, the residuals can be written as a weighted average of the kernel function between the test input and training input set, $\hat y_j(\mathbf x_{t^*})-\hat \mu_j=\mathbf w_j \mathbf k(\mathbf x_{t^*})$, as outlined by the second equality in Eq.~(\ref{equ:weighted_average_pred_mean}). 
When $\hat \mu_j=0$, the predictive mean estimator in Eq.~(\ref{equ:hat_y_j}) at each output coordinate is equivalent to the  kernel ridge regression separately for each coordinate $j$ \cite{gu2022theoretical}: %
\begin{equation}
\hat y_j(\cdot)=\mbox{argmin}_{f_j \in \mathcal H}\left\{\frac{1}{n}\sum^{n}_{t=1}(y_j(\mathbf x_t)-f_j(\mathbf x_t) )^2+\frac{\eta}{n}||f_j||^2_{\mathcal H} \right\},
\label{equ:krr}
\end{equation} 
where $\mathcal H$ is the reproducing kernel Hilbert space (RKHS) \cite{wendland2004scattered} attached to the kernel $K(\cdot,\cdot)$ and $||\cdot||_{\mathcal H}$ is the associated native norm. 
The loss function in {Eq.~(\ref{equ:krr})}  penalizes the fitting error and complexity of the model simultaneously, which helps avoid the overfitting problem automatically. Compared to the kernel ridge regression in Eq.~(\ref{equ:krr}),  the uncertainty of the prediction of PP-GP can be quantified based on the predictive distribution in  Eq.~(\ref{equ:pred_distr}).

Here the range and nugget parameters $(\bm \gamma, \eta)$ can be estimated by maximum marginal posterior distribution described in the Appendix and then these parameters will be plugged into Eq.~(\ref{equ:pred_distr}) for computing the predictive distribution. Here the uncertainty of the mean and variance parameters are taken into account in the analysis, whereas the uncertainty in estimating the range and nugget parameters was not considered due to computational feasibility, and confounding issues between kernel parameters  \cite{zhang2004inconsistent}. 
Sampling the parameters from the posterior distribution 
or residual bootstrap approach \cite{davison1997bootstrap} can be used for estimating the uncertainty of these kernel parameters.

\subsection{Forecast by parallel partial Gaussian processes}
 Here we focus on estimating the one-step-ahead transition function that maps the input $\mathbf x_t$ to the output $\mathbf y(\mathbf x_t)$ at any time point $t$.  
 Consider a simple scenario, where the previous snapshot is used for predicting the output at the current time step:  $\mathbf x_{t}=\mathbf y(\mathbf x_{t-1})$ for any $t\geq 1$. 
 Since the function is nonlinear, we may iteratively use the predictive distribution to sample $S$ chains for forecasting from $t^*=n+1,...,n+n^*$. 
 {Let} the input {be} $\mathbf x^{(s)}_{n+1}=\mathbf y (\mathbf x_n)$ for any chain $s$, $s=1,...,S$.  For each of the chain $s$,  we simulate a new output from the predictive distribution sequentially for $t^*=n+1,...,n+n^*$: 
\begin{equation}
\mathbf y^{(s)}(\mathbf x^{(s)}_{t^*+1})\sim p(\mathbf y^{(s)}(\mathbf x^{(s)}_{t^*+1}) \mid \mathbf Y, \mathbf X, \mathbf x^{(s)}_{t^*+1}, \bm \gamma, \eta),  
\end{equation}
where $p(\mathbf y^{(s)}(\mathbf x_{t^*+1}^{(s)}) \mid \mathbf Y, \mathbf X, \mathbf x^{(s)}_{t^*+1}, \bm \gamma, \eta)$ is the predictive distribution of $\mathbf y^{(s)}(\mathbf x_{t^*+1}) $.

As directly sampling from the joint predictive distribution at all output coordinates can be computationally intensive, one may sample the $j$th coordinate of the output from the marginal predictive distribution $p(y^{(s)}_{j}(\mathbf x_{t^*+1}) \mid \mathbf Y, \mathbf X, \mathbf x^{(s)}_{t^*+1}, \bm \gamma, \eta)$ in Eq.~(\ref{equ:pred_distr}) as an approximation. 
After we obtain predictive samples $y^{(s)}_{t^*,j}$ for $s=1,..., S$, we can use mean or median for predictions, and the lower and upper $\alpha$ quantiles of the samples for constructing the $1-\alpha$ predictive interval for any $0<\alpha<1$. Furthermore, the predictive mean $\hat{\mathbf y} (\mathbf x_{t^*+1})$ may be approximated by using a plug-in estimator of the input $ \mathbf x_{t^*+1} \approx \mathbf{\hat y}(\mathbf x_{t^*})$ by Eq.~(\ref{equ:hat_y_j}).  {The assessed uncertainty from PP-GP can alert when the forecast becomes
less accurate, which allows for timely interventions to control the predictive error, whereas the uncertainty assessment may not be not available in some other machine learning approaches \cite{kaiser2018sparse,chen2018neural,kovachki2023neural}. }

{Here we transform the forecast problem to predict the one-step-ahead transition function with a finite-dimensional input space. Under regularity conditions, the minimax convergence rate of Gaussian process regression to the truth was studied \cite{van2008rates}. We should note that the convergence is {obtained} when the training inputs fill {a bounded input} domain, whereas sequentially sampled input in forecast could move outside the training input domain in practice. When the inputs approach the boundary of input space, the quantified uncertainty becomes large, which can give alert for refining the prediction by expanding the input domain and training the PP-GP emulator with inputs in the expanded input domain. It is of interest to study the convergence of PP-GP forecast in different dynamical systems.}

\subsection{Computational complexity}
\label{subsec:pp-gp-computation}
{Compared with other GP approaches for emulating vectorized output}, one advantage of the PP-GP model comes with the computational scalability when the number of output coordinates $m$ is large. Computing the predictive mean of $m$ output coordinates in Eq.~(\ref{equ:hat_y_j})  requires $\mathcal O(nm)+\mathcal O(n^3)$ operations,  and to obtain $S$ predictive samples of $m$ output vectors at $n^*$ time points for uncertainty quantification requires $\mathcal O(n^2n^*S)+\mathcal O(n^2m)$ operations. The largest cost of PP-GP typically comes from estimating the range and nugget parameters, which requires $\mathcal O(\tilde Sn^3+\tilde Sn^2m)$ operations for $\tilde S$ iterations in numerical optimization. When the number of time points $n$ is large, approximation methods, such as the inducing point method \cite{snelson2006sparse} and the Vecchia approach \cite{vecchia1988estimation}, may be used for approximating the likelihood function of Gaussian processes. 

The computational advantage of PP-GP comes from two assumptions. First, the outputs at different coordinates are assumed to be independent. 
In Theorem 1 in \cite{gu2016parallel}, the authors show that the predictive mean of PP-GP is exactly the same as the predictive mean of a separable Gaussian process of the vectorized output, with the covariance $\bm \Sigma_y \otimes \mathbf K$, where $\bm \Sigma_y$ is the covariance of output at different coordinates, and the variance between the two models is similar.  The inverse of covariance between output coordinates $\bm \Sigma_y$ generally takes $\mathcal O(m^3)$ operations in computing the likelihood function of separable Gaussian process, whereas the complexity of predictions by PP-GP is linear to the number of coordinates ($m$). Second, the correlation of the output at different inputs $\mathbf x$ is shared across output coordinates.  Allowing the correlation parameters to differ at each spatial coordinate makes the computational complexity become $\mathcal O(n^3m)$ for computing the predictive mean, which is higher than $\mathcal O(nm)+\mathcal O(n^3)$. Furthermore, separably estimating $m(\tilde p+1)$ range and nugget parameters can be less stable as PP-GP.

\section{Generative models of dynamic mode decomposition}
\label{sec:dmd_variants}
 {Mathematical and machine learning approaches often minimize a loss function for estimating parameters.   Many {loss-minimization} approaches can be shown to be equivalent to statistical {estimations} {from} probabilistic {generative} models. 
We summarize a two-step procedure of building a {generative} model. 
        \begin{enumerate}
        \item Build a probabilistic {generative} model of untransformed data with parameters that can be identified from data. Generalize the model as much as possible to the extent that the parameters can still be identified from the data. 
        \item Show equivalence between the estimator that minimizes a loss function and a statistical estimator, such as the maximum likelihood estimator or maximum marginal likelihood estimator, of the probabilistic {generative} model. 
        \end{enumerate}
    {The generative model helps us understand the underlying assumptions from the loss-minimization estimator. A more efficient estimator can be derived based on the generative model if the loss-minimization estimator is not optimal}. We follow this procedure to derive a {generative} model of the dynamic mode decomposition that enables uncertainty assessment of the estimation. 
} 
\subsection{Dynamic mode decomposition}
Dynamic mode decomposition (DMD) is a data-driven approach to obtain a reduced rank representation of data from complex dynamical systems \cite{schmid2010dynamic}, which quickly gains popularity for approximating dynamical systems  \cite{schmid2022dynamic}.  
Here we summarize DMD  and derive a generative model of DMD.  

Let us split  the $m \times n$ real-valued observational matrix at $n$ time points  $\mathbf{Y}$  into two matrices, $\mathbf Y_{1:(n-1)}=[\mathbf{y}(\mathbf x_1),\mathbf{y}(\mathbf x_{2}),\dots, \mathbf{y}(\mathbf x_{n-1})]$ and $\mathbf Y_{2:n}=[\mathbf{y}(\mathbf x_2),\mathbf{y}(\mathbf x_{3}),\dots, \mathbf{y}(\mathbf x_{n})]$. 
 DMD  relies on the approximation:  $\mathbf{y}(\mathbf x_{t+1}) \approx \mathbf{A}\mathbf{y}({\mathbf x_t})$ for $t=1,\dots, n-1$, where $\mathbf{A}$ is an $m \times m$ matrix.  
In DMD,  $\mathbf A$ is estimated by minimizing the loss between the observations and the linear dynamics constructed from  previous time steps:
\begin{align}
\label{equ:objective_function_DMD}
    \hat{\mathbf A} = \argmin_{\mathbf A}\lVert \mathbf{Y}_{2:n}-\mathbf{A}\mathbf Y_{1:n-1} \rVert
    {= \mathbf{Y}_{2:n} (\mathbf{Y}_{1:{n-1}})^+,} 
\end{align}
where $\lVert \cdot \rVert$ is the $L_2$ norm or Frobenius norm {and $(\mathbf{Y}_{1:{n-1}})^+$ is the Moore–Penrose pseudo-inverse of $\mathbf{Y}_{1:{n-1}}$.}

We first introduce the lemma that connects the DMD estimation to the maximum likelihood estimator (MLE) of the linear mapping matrix in a dynamic linear model \cite{West1997} or linear state space model \cite{durbin2012time}. 

\begin{lemma}[Equivalence Between the MLE of the Linear  Mapping Matrix in a Linear State Space Model and the DMD Estimation]
 The DMD estimator  $\hat{\mathbf A}$ in Eq.~(\ref{equ:objective_function_DMD}) is the MLE of $\mathbf A$ of the following linear state space model 
 \begin{align} 
    \mathbf{y}(\mathbf x_{t+1}) = \mathbf{A} \mathbf y (\mathbf{x}_t) +  \bm{\varepsilon}_{t+1},
    \label{equ:dmd_process}
\end{align}
where {$ \bm{\varepsilon}_{t+1}\sim \mathcal{MN}(\mathbf 0, \bm \Sigma_{\varepsilon} )$ is a vector of Gaussian distributions with a positive definite covariance matrix $\bm \Sigma_{\varepsilon}$}, for any $t=1,2,...n$ and we assume the marginal distribution of initial state $\mathbf y(\mathbf x_1)$ does not depend on $\mathbf A$. 
{We refer the linear state model by Eq.~(\ref{equ:dmd_process})  the  DMD-induced process. }
 \label{lemma:equivalence_DMD_MLE}
 \end{lemma}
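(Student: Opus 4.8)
The plan is to write down the joint likelihood of the observations $\mathbf{y}(\mathbf{x}_1),\dots,\mathbf{y}(\mathbf{x}_n)$ under the linear state space model in Eq.~(\ref{equ:dmd_process}) and show that maximizing it over $\mathbf{A}$ reproduces exactly the least-squares objective in Eq.~(\ref{equ:objective_function_DMD}). First I would factor the joint density using the Markov structure of the model: $p(\mathbf{y}(\mathbf{x}_1),\dots,\mathbf{y}(\mathbf{x}_n) \mid \mathbf{A}, \tau^2) = p(\mathbf{y}(\mathbf{x}_1)) \prod_{t=1}^{n-1} p(\mathbf{y}(\mathbf{x}_{t+1}) \mid \mathbf{y}(\mathbf{x}_t), \mathbf{A}, \tau^2)$. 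By assumption the initial-state marginal does not depend on $\mathbf{A}$, so it contributes only an additive constant to the log-likelihood and can be dropped when optimizing over $\mathbf{A}$. Each conditional factor is, by Eq.~(\ref{equ:dmd_process}), the density of a $\mathcal{MN}(\mathbf{A}\mathbf{y}(\mathbf{x}_t), \tau^2 \mathbf{I}_m)$ random vector evaluated at $\mathbf{y}(\mathbf{x}_{t+1})$.

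Next I would assemble the log-likelihood. Writing out the multivariate normal densities, the $\mathbf{A}$-dependent part of $-2\log p$ is $\frac{1}{\tau^2}\sum_{t=1}^{n-1} \lVert \mathbf{y}(\mathbf{x}_{t+1}) - \mathbf{A}\mathbf{y}(\mathbf{x}_t) \rVert^2$ up to terms not involving $\mathbf{A}$. The key observation is that this sum of squared Euclidean norms of the columns equals the squared Frobenius norm $\lVert \mathbf{Y}_{2:n} - \mathbf{A}\mathbf{Y}_{1:(n-1)} \rVert_F^2$, since stacking the residual vectors $\mathbf{y}(\mathbf{x}_{t+1}) - \mathbf{A}\mathbf{y}(\mathbf{x}_t)$ as columns gives precisely the matrix $\mathbf{Y}_{2:n} - \mathbf{A}\mathbf{Y}_{1:(n-1)}$, and the squared Frobenius norm of a matrix is the sum of squared norms of its columns. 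Therefore, for any fixed $\tau^2 > 0$, maximizing the likelihood over $\mathbf{A}$ is equivalent to minimizing $\lVert \mathbf{Y}_{2:n} - \mathbf{A}\mathbf{Y}_{1:(n-1)} \rVert_F^2$, which is exactly the DMD objective in Eq.~(\ref{equ:objective_function_DMD}). Since $\tau^2$ enters only as a positive multiplicative factor on the $\mathbf{A}$-dependent term, the profiled MLE $\hat{\mathbf{A}}$ (obtained by first profiling out $\tau^2$ or simply noting the minimizer in $\mathbf{A}$ is independent of $\tau^2$) coincides with $\hat{\mathbf{A}}$ from DMD.

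I would then close the argument by noting that the minimizer is the usual least-squares / Moore--Penrose solution $\hat{\mathbf{A}} = \mathbf{Y}_{2:n}\mathbf{Y}_{1:(n-1)}^{+}$ (where $\mathbf{Y}_{1:(n-1)}^{+}$ denotes the pseudoinverse), so that both characterizations pick out the same matrix; when $\mathbf{Y}_{1:(n-1)}$ does not have full row rank one should remark that the minimizer is non-unique but the standard minimum-norm choice is shared by both formulations, so the equivalence is understood in that sense.

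The main obstacle here is not any deep step but rather being careful about two bookkeeping points: (i) making the dependence-on-$\mathbf{A}$ versus constant-in-$\mathbf{A}$ split rigorous, which relies squarely on the stated assumption that the law of $\mathbf{y}(\mathbf{x}_1)$ is $\mathbf{A}$-free — without it the initial term could bias the estimator; and (ii) the identification of the column-wise sum of squared $L_2$ norms with the Frobenius norm, together with handling the ambiguity in $\hat{\mathbf{A}}$ when $\mathbf{Y}_{1:(n-1)}$ is rank-deficient. Neither is hard, but both deserve an explicit sentence so the equivalence statement is unambiguous.
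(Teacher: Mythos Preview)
Your proposal is correct and follows essentially the same argument as the paper: factor the likelihood via the Markov structure, drop the $\mathbf{A}$-free initial-state term, write out the Gaussian conditionals, and identify the sum of squared residuals with $\lVert \mathbf{Y}_{2:n}-\mathbf{A}\mathbf{Y}_{1:(n-1)}\rVert^2$. Your additional remarks on profiling out $\tau^2$ and on the pseudoinverse/rank-deficient case are sound but go slightly beyond what the paper records.
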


 \begin{proof}
 {
 The log-likelihood of $\mathbf A$ and $\bm \Sigma_\varepsilon$ is:
\begin{align*}
    \mathcal L(\mathbf A, \mathbf \Sigma_\varepsilon) &= \log \biggl\{p(\mathbf y(\mathbf x_1))\prod_{t=2}^{n} p (\mathbf y(\mathbf x_t) \mid \mathbf y(\mathbf x_{t-1}), \mathbf A, \mathbf \Sigma_\varepsilon) \biggl\} \\
    & \propto -\frac{n}{2}\log(|\mathbf \Sigma_\varepsilon|) -\frac{\mathbf y(\mathbf x_1)^T \mathbf \Sigma^{-1}_\varepsilon \mathbf y(\mathbf x_1)}{2} \\
    &\qquad - \sum_{t=2}^{n}\frac{ (\mathbf y(\mathbf x_t) - \mathbf A \mathbf y(\mathbf x_{t-1}))^T \mathbf \Sigma^{-1}_\varepsilon (\mathbf y(\mathbf x_t) - \mathbf A \mathbf y(\mathbf x_{t-1}))}{2}.
\end{align*}
}

{Taking the derivative of log-likelihood with respect to $\mathbf A$ and $\bm \Sigma_\varepsilon$,  we obtain the maximum likelihood estimator of $\mathbf A$ and $\bm \Sigma_\varepsilon$:
\begin{align*}
    \hat{\mathbf A} &= \left( \sum_{t=2}^n \mathbf y(\mathbf x_t) \mathbf y(\mathbf x_{t-1})^T \right) \left(\sum_{t=2}^n \mathbf y(\mathbf x_{t-1}) \mathbf y(\mathbf x_{t-1})^T \right)^{+} \\
    & = \mathbf Y_{2:n} \mathbf Y^T_{1:n-1} (\mathbf Y_{1:n-1} \mathbf Y_{1:n-1}^T)^+ 
    =\mathbf{Y}_{2:n} (\mathbf{Y}_{1:{n-1}})^+, \\
    {\hat{\bm \Sigma}_{\varepsilon}} &= \frac{\mathbf y(\mathbf x_1) \mathbf y(\mathbf x_1)^T + \sum_{t=2}^n (\mathbf y(\mathbf x_{t}) - \hat{\mathbf A} \mathbf y(\mathbf x_{t-1}))(\mathbf y(\mathbf x_{t}) - \hat{\mathbf A} \mathbf y(\mathbf x_{t-1}))^T }{n}.
\end{align*}
The last equality in the equation of $\hat{\mathbf A}$ can be derived by performing the singular value decomposition (SVD) to $\mathbf Y_{1:n-1}$ and connecting the SVD with Moore–Penrose pseudo-inverse. 
} 
 \end{proof}

We notice that the maximum likelihood estimator of $\mathbf A$ is equivalent to the solution of DMD shown in Eq.~(\ref{equ:objective_function_DMD}). 
 Here $\mathbf{Y}_{1:(n-1)}^+$ can be computed by  the SVD of $\mathbf{Y}_{1:(n-1)} = \mathbf{U D V}^*$ as $\mathbf{Y}_{1:(n-1)}^+=\mathbf{V} \mathbf{D}^{-1} \mathbf{U}^*$, where $\mathbf{U}^*$ and $\mathbf{V}^*$ denote the conjugate transpose of $\mathbf U$ and $\mathbf V$, respectively, and $\mathbf{D} \in 
 \mathbb{R}^{m \times (n-1)}$ is a rectangular diagonal matrix of non-negative singular values. 
In practice, 
one can keep the  $r$  largest singular values for approximation:  $\mathbf{Y}_{1:(n-1)} \approx \mathbf{U}_r \mathbf{D}_r \mathbf{V}_r^*$, where $\mathbf U_r$ is the first $r$ columns of $\mathbf{U}$, $\mathbf{D}_r$ is a $r \times r$ diagonal matrix containing the first $r$ largest singular values, with $r\leq \text{min}(m,n-1)$, and $\mathbf{V}_r$ is the first $r$ columns of $\mathbf{V}$. 
 Consequently, $\hat{\mathbf{A}}$ can be approximated by
\begin{align}
\label{equ:dmd_Ahat_low_rank}
    \mathbf{\hat{A}} \approx \mathbf{Y}_{2:n} \mathbf{V}_r \mathbf{D}_r^{-1} \mathbf{U}_r^*.
\end{align}
As some singular values may be small, the approximation from the right-hand side of Eq.~(\ref{equ:dmd_Ahat_low_rank}) is typically more stable as it avoids numerical error in computing the diagonal terms in  $\mathbf{D}^{-1}$.

A primary goal of DMD is to identify the nonzero eigenvalues and their corresponding eigenvectors of $\mathbf{A}$,  denoted as $\{\lambda_i, \bm \phi_i \}^{r}_{i=1}$, which can approximate the Koopman eigenvalues and modes, respectively \cite{Brunton_koopman_siam_review_2022}. 
However, directly computing the eigenvalues and eigenvectors of a $m \times m$ matrix $\hat{\mathbf A}$ in Eq.~(\ref{equ:dmd_Ahat_low_rank}) can be costly when $m$ is large. 
To reduce the computational cost, we may project $\mathbf{\hat A}$ onto the column space of $\mathbf{U}_r$ and define $\tilde{\mathbf{A}}$ as
\begin{align}
    \tilde{\mathbf{A}} = \mathbf{U}_r^* \hat{\mathbf{A}} \mathbf{U}_r \approx \mathbf{U}_r^* \mathbf{Y}_{2:n} \mathbf{V}_r \mathbf{D}_r^{-1} \mathbf{U}_r^* \mathbf{U}_r = \mathbf{U}_r^* \mathbf{Y}_{2:n} \mathbf{V}_r \mathbf{D}_r^{-1}.
\end{align}
Denote $\{\lambda_i, \bm{\omega}_i\}^{r}_{i=1}$ to be the eigenpairs of $\tilde{\mathbf A}$ such that $\lambda_i \tilde{\mathbf{A}} = \tilde{\mathbf A}\bm{\omega}_i$.  In \cite{tu2014dmd}, the authors show that  $\lambda_i$ is the DMD eigenvalue, and the corresponding eigenvector of $\mathbf A$, also known as the DMD mode, can be calculated below
\begin{align}
    \bm{\phi}_i = \frac{1}{\lambda_i}\mathbf{Y}_{2:n}\mathbf{V}_r\mathbf{D}_r^{-1}\bm{\omega}_i,
\end{align}
for $i=1,...,r$.

The snapshots at any $t$ can be approximated by DMD modes and eigenvalues with a smaller dimension. Denote $\bm{\Phi}=[\bm \phi_1,\dots,\bm \phi_r]$ and $\bm{\Lambda} = \text{diag}(\lambda_1,\dots,\lambda_r)$, for any $t\geq 1$,  the reconstructed snapshots $\mathbf{\hat y}(\mathbf{x}_{t})$ can be represented as
\begin{align}
    \mathbf{\hat y}(\mathbf{x}_{t}) = \mathbf{\hat A}^{t-1}  \mathbf{y}(\mathbf{x}_{1}) =\bm{\Phi} \bm{\Lambda}^{t-1}\bm{b},
    \label{equ:reconstructed_y}
\end{align}
where $\bm{b} = [b_1,\dots,b_r]^T= \bm{\Phi}^+\mathbf{y}(\mathbf{x}_1)$ represents the mode amplitudes with $\bm \Phi^{+}$ being the  pseudo-inverse of $\bm \Phi$. As $\mathbf{y}(\mathbf x_1)$ may contain measurement error, an alternative way  is to minimize the squared error loss below:  
\begin{equation}
\hat{\bm{b}} = \argmin_{\bm b} \sum_{t=1}^n \lVert \bm{\Phi}\mathbf{\Lambda}^{t-1} \bm{b}-\mathbf{y}(\mathbf{x}_t) \rVert^2,
\end{equation}
where $\lVert \cdot \rVert$ is the $L_2$ norm or Frobenius norm.

Eq.~(\ref{equ:reconstructed_y}) can be applied to any $t$, including those $t^*>n$ with $n$ being the number of observed time points, and thus it can be used for forecasts. When the  observations are noise-free, a more straightforward way is to let $ \mathbf{\hat y}(\mathbf x_n)=\mathbf y(\mathbf x_n)$ and forecast output vector on any $t^*>n$ by 
\begin{equation}
\mathbf{\hat y}(\mathbf x_{t^*})= \mathbf{\hat A}^{t^*-n} \mathbf{y}(\mathbf x_n).
\label{equ:direct_forecast}
\end{equation}

From the  DMD-induced process in Eq.~(\ref{equ:dmd_process}), we have the following lemma, which gives the posterior distribution for forecast.  
\begin{lemma}
\label{lemma:posterior_y_t_star}
Conditional on the observations  $\mathbf Y$ with plug-in estimators of  $\mathbf{\hat A}$ and {$\hat{\bm \Sigma}_{\varepsilon}$}, the posterior distribution of the output vector of DMD-induced process  in Eq.~(\ref{equ:dmd_process}) at any $\mathbf x_{t^*}$ follows a multivariate normal distribution
\begin{equation}
\left(\mathbf{ y}(\mathbf x_{t^*})\mid \mathbf Y,  \mathbf{\hat A}, 
{\hat{\bm \Sigma}_{\varepsilon}}\right) \sim  \mathcal{MN}\left(\mathbf {\hat  y}(\mathbf x_{t^*}), \, \sum^{t^*-n-1}_{i=0} \mathbf{\hat A}^i 
{\hat{\bm \Sigma}_{\varepsilon}} (\mathbf{\hat A}^T)^i   \right), 
\label{equ:dmd_forecast_dist}
\end{equation}
where $\mathbf {\hat  y}(\mathbf x_{t^*})$ follows Eq.~(\ref{equ:direct_forecast}) for any $t^*>n$. 
\end{lemma}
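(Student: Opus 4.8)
The plan is to exploit the Markov structure of the DMD-induced process in Eq.~(\ref{equ:dmd_process}) and unroll the recursion forward from the last observed snapshot $\mathbf{y}(\mathbf{x}_n)$. Conditioning on $\mathbf Y$ (in particular on $\mathbf{y}(\mathbf x_n)$) and on the plug-in values $\mathbf{\hat A}$ and $\hat\tau^2$, the process obeys $\mathbf{y}(\mathbf x_{t+1}) = \mathbf{\hat A}\,\mathbf{y}(\mathbf x_t) + \bm\varepsilon_{t+1}$ with $\bm\varepsilon_{t+1}\sim\mathcal{MN}(\mathbf 0,\hat\tau^2\mathbf I_m)$ independent across $t$ and independent of the observed history. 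Iterating this identity from $t=n$ up to $t=t^*-1$ gives the explicit representation
\[
\mathbf{y}(\mathbf x_{t^*}) = \mathbf{\hat A}^{t^*-n}\mathbf{y}(\mathbf x_n) + \sum_{i=0}^{t^*-n-1}\mathbf{\hat A}^{i}\,\bm\varepsilon_{t^*-i}.
\]

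Then I would observe that the right-hand side is an affine function of the jointly Gaussian vector $(\bm\varepsilon_{n+1},\dots,\bm\varepsilon_{t^*})$, hence $\mathbf{y}(\mathbf x_{t^*})$ is multivariate normal. Its mean is the deterministic term $\mathbf{\hat A}^{t^*-n}\mathbf{y}(\mathbf x_n)$, which is exactly $\mathbf{\hat y}(\mathbf x_{t^*})$ in Eq.~(\ref{equ:direct_forecast}). Its covariance is $\sum_{i=0}^{t^*-n-1}\mathbf{\hat A}^{i}\,\mathrm{Cov}(\bm\varepsilon_{t^*-i})\,(\mathbf{\hat A}^{T})^{i} = \hat\tau^2\sum_{i=0}^{t^*-n-1}\mathbf{\hat A}^{i}(\mathbf{\hat A}^{T})^{i}$, where independence of the innovations kills all cross terms. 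This matches Eq.~(\ref{equ:dmd_forecast_dist}).

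Alternatively, and perhaps cleaner for the write-up, the same result follows by induction on $t^*$: the base case $t^*=n+1$ gives $\mathcal{MN}(\mathbf{\hat A}\mathbf{y}(\mathbf x_n),\hat\tau^2\mathbf I_m)$, i.e. the $i=0$ term only; and in the inductive step one uses the one-step update $\mathbf{y}(\mathbf x_{t^*+1})=\mathbf{\hat A}\mathbf{y}(\mathbf x_{t^*})+\bm\varepsilon_{t^*+1}$ together with the standard affine-transformation rule for Gaussians, so that the mean maps to $\mathbf{\hat A}^{t^*+1-n}\mathbf{y}(\mathbf x_n)$ and the covariance to $\mathbf{\hat A}\bigl(\hat\tau^2\sum_{i=0}^{t^*-n-1}\mathbf{\hat A}^{i}(\mathbf{\hat A}^{T})^{i}\bigr)\mathbf{\hat A}^{T}+\hat\tau^2\mathbf I_m = \hat\tau^2\sum_{i=0}^{t^*-n}\mathbf{\hat A}^{i}(\mathbf{\hat A}^{T})^{i}$, which is the claimed formula with $t^*$ replaced by $t^*+1$.

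I do not expect a genuine obstacle here; the content is essentially propagation of Gaussian uncertainty through a linear recursion. The only points that need care are bookkeeping ones: getting the summation range right (the power $t^*-n$ on $\mathbf{\hat A}$ in the mean versus the upper limit $t^*-n-1$ in the covariance sum after the reindexing $i\mapsto t^*-i$), and being explicit that this is a \emph{plug-in} predictive distribution, i.e.\ it conditions on the point estimates $\mathbf{\hat A},\hat\tau^2$ and treats $\mathbf{y}(\mathbf x_n)$ as the exact current state, so the extra uncertainty from estimating $\mathbf A$ and $\tau^2$ (and from measurement noise in $\mathbf{y}(\mathbf x_n)$ itself) is not propagated, consistent with how Eq.~(\ref{equ:direct_forecast}) is used for the mean forecast.
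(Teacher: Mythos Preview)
Your proposal is correct, and your inductive argument is exactly the one the paper gives (base case $t^*=n+1$ with covariance $\hat\tau^2\mathbf I_m$, then the one-step update pushing the mean to $\mathbf{\hat A}^{t^*+1-n}\mathbf{y}(\mathbf x_n)$ and the covariance to $\hat\tau^2\sum_{i=0}^{t^*-n}\mathbf{\hat A}^{i}(\mathbf{\hat A}^{T})^{i}$, with the paper phrasing the step via the laws of total expectation and total covariance rather than the affine-Gaussian rule). Your explicit unrolling is an equivalent and arguably more transparent route to the same identities.
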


\begin{proof}
We prove this by induction. For $t^*=n+1$, we have 
\begin{align*}
\mathbb E[{\mathbf y}(\mathbf x_{t^*}) \mid  \mathbf Y,  \mathbf{\hat A}, 
{\hat{\bm \Sigma}_{\varepsilon}}
]&=\mathbf{\hat A} \mathbf y(\mathbf x_n), \\
\mathbb V[{\mathbf y}(\mathbf x_{t^*}) \mid  \mathbf Y,  \mathbf{\hat A}, 
{\hat{\bm \Sigma}_{\varepsilon}}] &= 
{\hat{\bm \Sigma}_{\varepsilon}}. 
\end{align*}
Assume for  $t^*>n+1$, we have 
\begin{align*}
\mathbb E[{\mathbf y}(\mathbf x_{t^*}) \mid  \mathbf Y,  \mathbf{\hat A}, 
{\hat{\bm \Sigma}_{\varepsilon}}]&=\mathbf{\hat A}^{t^*-n}\mathbf{y}(\mathbf x_n), \\
\mathbb V[{\mathbf y}(\mathbf x_{t^*}) \mid  \mathbf Y,  \mathbf{\hat A}, 
{\hat{\bm \Sigma}_{\varepsilon}}] &=  \sum^{t^*-n-1}_{i=0} \mathbf{\hat A}^i 
{\hat{\bm \Sigma}_{\varepsilon}} (\mathbf{\hat A}^T)^i.  
\end{align*}

For any $t^*+1$, by the sampling model in Eq.~(\ref{equ:dmd_process}) and the law of total expectation, the posterior mean follows:   
\begin{align*}
\mathbb E[{\mathbf y}(\mathbf x_{t^*+1}) \mid  \mathbf Y,  \mathbf{\hat A}, {\hat{\bm \Sigma}_{\varepsilon}}] =&\mathbb E[\mathbb E[{\mathbf y}(\mathbf x_{t^*+1}) \mid  \mathbf Y,  {\mathbf y}(\mathbf x_{t^*}), \mathbf{\hat A}, {\hat{\bm \Sigma}_{\varepsilon}}]] \\
=& \mathbb E[\mathbf{\hat A} {\mathbf y}(\mathbf x_{t^*}) \mid \mathbf Y, \mathbf{\hat A}, {\hat{\bm \Sigma}_{\varepsilon}}] \\
=& \mathbf{\hat A} \mathbf{\hat y}(\mathbf x_{t^*}) =\mathbf{\hat A}^{t^*+1-n} \mathbf{y}(\mathbf x_{n}). 
\end{align*}
By the sampling model in Eq.~(\ref{equ:dmd_process}) and the law of total covariance,  for any   $t^*+1$, the posterior covariance follows
\begin{align*}
&\mathbb V[{\mathbf y}(\mathbf x_{t^*+1}) \mid \mathbf Y, \mathbf{\hat A}, {\hat{\bm \Sigma}_{\varepsilon}}]\\
=&\mathbb V[\mathbb E[{\mathbf y}(\mathbf x_{t^*+1}) \mid \mathbf Y, \mathbf y(\mathbf x_{t^*}), \mathbf{\hat A}, {\hat{\bm \Sigma}_{\varepsilon}}]]+\mathbb E[\mathbb V[{\mathbf y}(\mathbf x_{t^*+1}) \mid \mathbf Y, \mathbf y(\mathbf x_{t^*}), \mathbf{\hat A}, {\hat{\bm \Sigma}_{\varepsilon}}]]  \\
=&  \mathbb V[ \mathbf{\hat A} {\mathbf y}(\mathbf x_{t^*})\mid \mathbf Y,  \mathbf{\hat A}, {\hat{\bm \Sigma}_{\varepsilon}}] +{\hat{\bm \Sigma}_{\varepsilon}} \\
=&\mathbf{\hat A} \left(\sum^{t^*-n-1}_{i=0} \mathbf{\hat A}^i {\hat{\bm \Sigma}_{\varepsilon}} (\mathbf{\hat A}^T)^i \right) \mathbf{\hat A}^T +{\hat{\bm \Sigma}_{\varepsilon}}
= \sum^{t^*-n}_{i=0} \mathbf{\hat A}^i {\hat{\bm \Sigma}_{\varepsilon}} (\mathbf{\hat A}^T)^i. 
\end{align*}

\end{proof}

Although we can assess the uncertainty of the forecast by DMD from the predictive distribution in Eq.~(\ref{equ:dmd_forecast_dist}), 
{the linear state space model can be restrictive to approximate nonlinear dynamical systems. Furthermore, the uncertainty in estimating $\mathbf A$ and $\bm \Sigma_{\varepsilon}$ is not propagated for predictions}. Finally, choosing the rank $r$ in DMD is an open problem as it represents one's belief on the degree of the model is misspecified, which could be hard to be quantified precisely. Typical ways of choosing $r$ include letting the summation of DMD eigenvalues explain a large proportion of the output variability, while this choice could potentially misfit the data, as minimizing the $L_2$ loss in Eq.~(\ref{equ:objective_function_DMD}) cannot avoid overfitting the data.

\subsection{Higher order dynamic mode decomposition}
One limitation of the DMD approach is that only the observation from the prior time point is used,  equivalently inducing a first-order Markov model in Eq.~(\ref{equ:dmd_process}). Variants of DMD approaches, such as Higher Order Dynamic Mode Decomposition (HODMD) \cite{le2017higher} or Hankel DMD \cite{arbabi2017ergodic}, use more observations from longer time lag to construct the dynamics: 
\begin{align}
\label{equ:HODMD_assumption}
    \mathbf{y}(\mathbf{x}_{t+q}) = \mathbf{A}_1\mathbf{y}(\mathbf{x}_{t}) + \mathbf{A}_2 \mathbf{y}(\mathbf{x}_{t+1}) + \dots + \mathbf{A}_q \mathbf{y}(\mathbf{x}_{t+q-1}),
\end{align}
where $q \geq 1$ is a tunable parameter that determines the number of time-lagged snapshots to be included in the model.

The estimation accuracy from  HODMD  can be higher than the conventional DMD as multiple time-lagged snapshots are used. 
 For scenarios where the number of time points is larger than the number of output coordinates, including more time-lagged snapshots can increase the upper bound of the number of nonzero singular values, thus potentially capturing complex dynamics in a higher dimensional space. 
From the theoretical point of view, the eigenfunctions and eigenvalues of HODMD are guaranteed to converge to the Koopman eigenfunctions and eigenvalues for ergodic systems \cite{arbabi2017ergodic}.

Let us  define  $\mathbf{y}^{\text{aug}}(\mathbf{x}_t)=( \mathbf{y}(\mathbf{x}_t)^T, \mathbf{y}(\mathbf{x}_{t+1})^T,..., \mathbf{y}(\mathbf{x}_{t+q-1})^T)^T $, an augmented vector of  $mq$ dimensions that contain  $q$ snapshots. 
The linear mapping matrix  $\mathbf{\hat{A}}^{\text{HODMD}}$ in HODMD can be obtained by minimizing the Frobenius or $L_2$ norm between the observations and linear dynamics constructed from the previous time steps: 
$   \mathbf{\hat{A}}^{\text{HODMD}}= \argmin_{\mathbf{A}^{\text{HODMD}}}\lVert \mathbf{Y}^{\text{aug}}_{2:n}- \mathbf{A}^{\text{HODMD}}\mathbf{Y}^{\text{aug}}_{1:(n-1)} \rVert$, 
where $\mathbf{Y}^{\text{aug}}_{2:n} = [\mathbf{y}^{\text{aug}}(\mathbf{x}_2),\dots, \mathbf{y}^{\text{aug}}(\mathbf{x}_n)]$ and $\mathbf{Y}^{\text{aug}}_{1:(n-1)} = [\mathbf{y}^{\text{aug}}(\mathbf{x}_1),\dots, \mathbf{y}^{\text{aug}}(\mathbf{x}_{n-1})]$.

However, 
concatenating $q$ consecutive snapshots increases the number of rows in $\mathbf{Y}^{\text{aug}}_{1:(n-1)}$ from $m$ to $mq$, and the cost of a singular value decomposition for $\mathbf{Y}^{\text{aug}}_{1:(n-1)}$, leading to higher computational cost. To overcome this limitation, one can use a subsampled version of the data instead of the entire dataset. For instance, one might skip $\Delta t$ time steps when constructing the data matrices, i.e., $\mathbf{\tilde Y}_{1}^{\text{aug}} = [\mathbf{y}^{\text{aug}}(\mathbf{x}_1), \mathbf{y}^{\text{aug}}(\mathbf{x}_{1+\Delta t}),\dots,\mathbf{y}^{\text{aug}}(\mathbf{x}_{1+\lfloor  \frac{n-2}{\Delta t}\rfloor \Delta t}) ]$ and $\mathbf{\tilde Y}_{2}^{\text{aug}}=[\mathbf{y}^{\text{aug}}(\mathbf{x}_2), \mathbf{y}^{\text{aug}}(\mathbf{x}_{2+\Delta t}),\dots, \mathbf{y}^{\text{aug}}(\mathbf{x}_{2+\lfloor \frac{n-2}{\Delta t}\rfloor \Delta t})]$. The estimator of $\mathbf A^{\text{HODMD}}$ can be computed below 
\begin{align}
\label{equ:objective_func_hodmd_d_s}
    \mathbf{\hat{A}}^{\text{HODMD}}= \argmin_{\mathbf{A}^{\text{HODMD}}}\lVert \mathbf{\tilde Y}^{\text{aug}}_{2}- \mathbf{A}^{\text{HODMD}}\mathbf{\tilde Y}^{\text{aug}}_{1} \rVert. 
\end{align}
The {HODMD} contains two prespecified parameters: the number of time-lagged snapshots  $q$ to be included in any given time, and the number of skipped time steps $\Delta t$ in estimation. 
Note that the estimated $\mathbf A^{\text{HODMD}}$ does not preserve the model structure in Eq.~(\ref{equ:HODMD_assumption}). Instead, let us consider the following {generative} model for  $\text{HODMD}$ with parameters $(q, \Delta t)$ 
\begin{equation}
\mathbf y^{\text{aug}}(\mathbf x_{2+i\Delta t })=\mathbf A^{\text{HODMD}} \mathbf y^{\text{aug}}(\mathbf x_{1+i\Delta t })+\bm \varepsilon^{\text{aug}}_{{2+i\Delta t}}, 
\label{equ:HODMD_sampling_model}
\end{equation}
for $i=1,...,\lfloor (n-2)/\Delta t\rfloor$ with $\bm \varepsilon^{\text{aug}}_{{2+i\Delta t}}\sim \mathcal{MN}(\mathbf 0, \bm \Sigma^{aug}_{\varepsilon})$. {In practice, the model performance depends on the choice of $(q, \Delta t)$, since the {underlying data-generating} model may rely on multiple time-lagged snapshots and using observations from longer time lag make the model more accurate.} The $\text{HODMD}$ estimator  in Eq.~(\ref{equ:objective_func_hodmd_d_s}) is equivalent to the MLE of $\mathbf A^{\text{HODMD}}$ in the {generative} model in Eq.~(\ref{equ:HODMD_sampling_model}).

\subsection{Extended dynamic mode decomposition}
\label{sec:edmd}

The {generative} model of the DMD algorithm in Eq.~(\ref{equ:dmd_process}) is a linear state space model, while some dynamical systems cannot be accurately approximated by linear dynamics.  
The extended dynamic mode decomposition (EDMD) \cite{williams2015data} aims to define a dictionary of {$\tilde{m}$} nonlinear basis functions {$\mathbf k(\cdot)=(k_1(\cdot),\dots,k_{\tilde m}(\cdot))^T$} to lift the observations to a system that can be approximated by linear dynamics.  
Denote the linear mapping matrix $\mathbf{A}^{\text{EDMD}}$ to be an approximation of the Koopman operator. In EDMD, the linear mapping matrix $\mathbf{A}^{\text{EDMD}}$ is obtained by minimizing the squared error loss function
\begin{align}
    \mathbf{\hat{A}}^{\text{EDMD}} = \argmin_{\mathbf{A}^{\text{EDMD}}} \sum_{t=1}^{n-1} \lVert \mathbf{k}(\mathbf{y}(\mathbf x_{t+1})) - \mathbf{A}^{\text{EDMD}} \mathbf{k}(\mathbf{y}(\mathbf x_t))  \rVert^2_2.
\end{align}

Similar to the DMD-induced process, the estimator of EDMD is equivalent to the maximum likelihood estimator of the linear mapping matrix in a linear state space model defined in the lifted space for $t=1,\dots,n-1$: 
\begin{align}
    \mathbf{k}(\mathbf{y}(\mathbf{x}_{t+1})) = \mathbf{A}^{\text{EDMD}}\mathbf{k}(\mathbf{y}(\mathbf{x}_{t})) + \bm \varepsilon^{\text{EDMD}}_{t+1},
    \label{equ:EDMD_loss}
\end{align}
 with $\bm \varepsilon^{\text{EDMD}}_{t+1} \sim \mathcal{MN}(\mathbf 0, {\bm \Sigma_{\varepsilon}^{\text{EDMD}}} 
)$, where $\bm \Sigma^{\text{EDMD}}_{\varepsilon}$ is a positive definite matrix.  

After estimating the linear mapping matrix between the linear state space model, we need to transform it back   to predict the future states \cite{korda2018linear}, which may be achieved by defining 
 $   \mathbf{\hat y}(\mathbf x_{t}) = \mathbf P \mathbf{k}(\mathbf y(\mathbf x_t))$,  
where $\mathbf{P}$ is a $m \times \tilde{m}$ matrix and can be estimated by 
  $  \mathbf{\hat{P}} = \argmin_{\mathbf P}\sum_{t=1}^n \lVert \mathbf{y}(\mathbf x_{t}) - \mathbf P \mathbf{k}(\mathbf y(\mathbf x_t)) \rVert ^2$.

Choosing an appropriate set of basis functions is crucial for the EDMD method. 
A few generic basis functions, such as Hermite polynomials, radial basis functions, and discontinuous spectral elements, were suggested in  \cite{williams2015data}. The selection of basis functions depends on the context of the problem and domain knowledge may be used as well, whereas misspecifying basis functions can degrade the estimation efficiency of the model. 
PP-GP is closely connected to EDMD. Assuming the mean is zero, we can write the predictive mean of PP-GP in a matrix form 
$\mathbf {\hat Y}=\mathbf W\mathbf {K} $, 
where $\mathbf Y$ is a $m\times n$ observational matrix of $n$ snapshots,   
and $\mathbf { W}=[\mathbf { w}^T_1,...,\mathbf {w}^T_m]^T$ is an {$m\times n$} weight matrix given from {Corollary}  \ref{corollary:pred_mean}. This means the prediction from PP-GP uses the same kernel basis to represent the output for each coordinate, whereas the weights are estimated by solving the linear system of equations with shared coefficients, separately for the output at each coordinate. 
Compared to EDMD, the PP-GP does not project the output onto the lifted space, and hence we do not need to transform the lifted states back for forecasting. 
The PP-GP is a flexible model, as the mean and variance parameters are distinct for each coordinate, which can be marginalized out by computing the predictive distribution. Besides, the covariance matrix contains range and nugget parameters, and they are estimated by the maximum marginal posterior distribution, discussed in \hyperref[sec:appendix]{Appendix}.

\subsection{Computational complexity}
{The computational complexity of estimating the transition and covariance matrices in DMD are $\mathcal{O}(\text{min}(m^2n, mn^2))$ and $\mathcal{O}(m^2n)$, respectively. {Obtaining the} $n^*$-step {forecast} with uncertainty quantification by DMD requires $\mathcal{O}(m^2rn^*)$, where $r$ is the rank of the observation matrix $\mathbf Y_{1:n-1}$. Similarly, for HODMD with time lag $q$ and thining parameter $\Delta t$, the computational complexity of parameter estimation and $n^*$-step forecast with uncertainty quantification are $\mathcal{O}(\text{min}((mq)^2\lfloor \frac{n}{\Delta t} \rfloor, mq(\lfloor \frac{n}{\Delta t} \rfloor)^2)) + \mathcal{O}((mq)^2\lfloor \frac{n}{\Delta t} \rfloor)$ and $\mathcal{O}((mq)^2rn^*)$, respectively. For EDMD using $\tilde{m}$ radial basis functions to lift the data where the centers are determined by the K-means clustering approach, transforming the data and estimating the parameters in the lifted space requires $\mathcal{O}(mn\tilde{m}T)$, where $T$ is the number of iterations in K-means. Obtaining the predictions needs $\mathcal{O}(m \tilde{m} n^*)$. For simplicity, here we assume $m>n>\tilde{m}$. The computational complexity for estimating the parameters as well as providing forecast with uncertainty assessment by DMD, HODMD, EDMD, and PP-GP are summarized in Table \ref{tab:complexity}.}

\begin{table}[t]
\centering
\begin{tabular}{lll}
\hline
 & Parameter estimation & 
 Forecast and UQ  \\
\hline 
 DMD & $\mathcal{O}({m^2n})$ &  $\mathcal{O}(m^2rn^*)$\\

HODMD & $\mathcal{O}({(mq)^2 \lfloor\frac{n}{\Delta t}\rfloor})$ & 
$\mathcal{O}({m^2} q^2 r n^*)$ \\
EDMD & $\mathcal{O}(mn\tilde{m}T)$ & $\mathcal{O}(m\tilde{m} n^*)$ \\
PP-GP & $\mathcal{O}(\tilde{S}n^3+\tilde{S}n^2m)$ & $\mathcal{O}(Smn^2n^*+n^3 )$\\ 
\hline
\end{tabular}
\caption{{Computational complexity for parameter estimation and forecast of $n^*$ time points in DMD, HODMD, EDMD, and PP-GP, where $m$ is the dimension of an output, $n$ is the number of observations, $r$ is the rank of data matrix, $q$ is the number of snapshots stacked in HODMD, $\Delta t$ is the number of skipped time points in HODMD, $\tilde{m}$ is the number of basis functions in EDMD, $T$ is the number of iterations in K-means, $\tilde{S}$ and $S$ are the iterations of numerical optimization and samples for forecast in PP-GP. }}
\label{tab:complexity}
\end{table}

\section{Connection of different data-driven approaches of modeling dynamical systems with respect to  {generative models}}
\label{sec:connection}
Here we compare three classes of data-driven models, namely the proper orthogonal decomposition (POD) \cite{berkooz1993proper}, DMD and PP-GP approaches. To simplify the notations, we assume the data are properly centered, meaning that the $m \times n$ real-valued output matrix $\mathbf Y$ has zero mean. 

In POD, the data are decomposed by SVD $\mathbf Y=\mathbf U_y \mathbf D_y \mathbf V_y^T$, where $\mathbf U_y$ and $\mathbf V_y$ are $m\times m$ and $n\times n$ unitary matrix, respectively, and  $\mathbf D_y$ is a $m \times n$ rectangle diagonal matrix with non-negative singular values in the diagonals. The first $r \leq m$ columns of $\mathbf U_y$ associated with the largest $r$ singular values provide the orthogonal basis of a linear subspace to reconstruct the covariance of output at different coordinates by treating the temporal observations as independent measurements:  $\mathbf Y \mathbf Y^T/(n-1)=\mathbf U_y \mathbf D^2_y \mathbf U^T_y/(n-1)$. This approach is known as the principal component analysis (PCA) \cite{jolliffe2002principal}, which is widely used in unsupervised learning and dimension reduction. The SVD basis from the  PCA can be shown to have the same linear subspace to the maximum marginal likelihood estimator  of {the linear mapping matrix} $\mathbf B$ after marginalizing out $\mathbf z(\mathbf x_t)$ in the following {generative} model  \cite{tipping1999probabilistic}:  
\begin{equation}
\mathbf y(\mathbf x_t)=\mathbf B \mathbf z(\mathbf x_t) +\bm \epsilon_t, 
\label{equ:ppca}
\end{equation}
where $\bm \epsilon_t \sim \mathcal{MN}(\mathbf 0, \sigma^2_0 \mathbf I_m)$ and $\mathbf z(\mathbf x_t) \sim  \mathcal{MN}(\mathbf 0, \mathbf I_r)$ is a $r$-dimensional latent factors independently following  standard normal distributions. 
Under such model, the covariance of the data at each time follows $\mathbb V[\mathbf y(\mathbf x_t)]=\mathbf B \mathbf B^T+\sigma^2_0\mathbf I_m$. 
However, the {generative} model assumes independence between the observations at different time points, which is restrictive. In \cite{gu2020generalized},  $z_l(\cdot)$ is modeled as a GP for each $l=1,...,r$, and the  maximum marginal likelihood estimator of $\mathbf B$ under the assumption $\mathbf B^T\mathbf B=\mathbf I_r$ is derived. 

Second, the {generative} model of DMD is given in Eq.~(\ref{equ:dmd_process}), where the noise of the data is not modeled. Assuming the initial states follow a multivariate normal distribution with zero mean and covariance {$\bm \Sigma_{\varepsilon}$}. It is not hard to show that  $\mathbb E[\mathbf y(\mathbf x_t)]=\mathbf 0$ and the covariance between any two output vectors at two time points follows: 
$\mbox{Cov}[\mathbf y(\mathbf x_{t'}),\mathbf y(\mathbf x_{t})]=\mathbf A^{t'-t}\sum^{t-1}_{i=0} \mathbf A^i {\bm \Sigma_{\varepsilon}} (\mathbf A^{T})^i$ for $t'\geq t$. Specifically, the covariance of output at any time point $t\geq 1$ follows  $\mathbb{V}[\mathbf y(\mathbf x_{t})]=\sum^{t-1}_{i=0} \mathbf A^i {\bm \Sigma_{\varepsilon}} (\mathbf A^{T})^i $. Compared with the sampling model in POD, the {generative} model in the DMD-induced process in Eq.~(\ref{equ:dmd_process}) are correlated over time and the strength of the correlation is captured by the linear mapping matrix $\mathbf A$ between output vectors at the consecutive time points. 
The DMD-induced process may not be differentiable with respect to time and the assumption of homogeneous variance at each output coordinate may also be restrictive for applications where the output has different scales. 

Third, the  PP-GP model in Eq.~(\ref{equ:hat_y_j}) 
has the same predictive mean as modeling the output matrix $\mathbf Y$ by a matrix-normal distribution 
 \cite{gu2016parallel},  with a separable covariance $\mathbb V[\mathbf Y]=\bm \Sigma_y \otimes \mathbf {\tilde K}$, where $\bm \Sigma_y$ is the covariance between output coordinates with the $j$th diagonal term being $\sigma^2_j$, for $j=1,...,m$ and  $\mathbf {\tilde K}$ is the correlation matrix between inputs with $\otimes$ denoting the Kronecker product. 
In comparison, the {generative} model by DMD in Eq.~(\ref{equ:dmd_process}) is a linear state space model, which has a semi-separable covariance structure.   The PP-GP induces nonlinear dynamics when using the observations from previous time points as the inputs, and the differentiability of the nonlinear processes induced by PP-GP can be controlled by the choice of kernel function. 
When the underlying dynamic is smooth, {a} differentiable {GP} prior of the nonlinear dynamics by PP-GP may be preferred to have a better {convergence} rate of compared to a {nondifferential} GP prior \cite{van2008rates}. 
Another advantage of PP-GP is that the range parameters can be estimated by the MLE or maximum marginal posterior mode, which is more flexible than using fixed nonlinear basis functions in EDMD. 
Lastly, the variance of the output coordinate is distinct and the variance estimator of PP-GP has a closed form expression in Eq.~(\ref{equ:sigma_j_2}), whereas the induced processes by DMD and its variants typically have homogeneous variance. The different variance terms make PP-GP particularly suitable when the output has different scales, which are common in practice.

\section{Numerical results}
We compare different data-driven forecast approaches for nonlinear dynamical systems, focusing on uncertainty quantification of the forecast. We consider two scenarios. In the first scenario, we assume the underlying dynamical system is modeled by a map from  $\mathbb R^{p} \to \mathbb R^m$: $d \mathbf y/dt=\mathbf f(\mathbf x_t)$, where the input variables $\mathbf x_t$ is a subset of $\mathbf y_{t}$. Here the vector-valued function $\mathbf f$ is treated as unknown and required to be approximated, whereas the inputs $\mathbf x_t$ are known. For all approaches, we do not include the vector-valued function $\mathbf f(\cdot)$  in nonlinear basis functions. Instead, we test uncertainty quantification with generic kernels or nonlinear basis functions that provide default ways of approximation. 
In the second scenario, the {dynamical system} is described by $d \mathbf y/dt=\mathbf f(\mathbf x_t, \mathbf u_t)$, where we can only observe $\mathbf x_t$, whereas the external inputs $\mathbf u_t$ and the vector-valued function $\mathbf f(\cdot)$ are unobserved. 

For both scenarios, we forecast held-out data $\mathbf y(\mathbf x_{t^*})=(y_1(\mathbf x_{t^*}),...,y_m(\mathbf x_{t^*}))^T$ at $t^*=n+1,n+2,...,n+n^*$. 
 {The autoregressive model with lag order 1 (AR(1)) separately fitted for each coordinate is used as a benchmark prediction model \cite{petris2009dynamic}}. Also included are   DMD,  HODMD and PP-GP.  {Note that the {generative} model of the DMD method is equivalent to a noise-free vector autoregressive (VAR) model with lag order 1 \cite{prado2010time} and hence we do not include other VAR models for comparison.} 
The criteria include the predictive root of mean squared error (RMSE), the average length of the $95\%$ predictive intervals ($L(95\%)$), and the proportion of the samples covered in the $95\%$ predictive interval ($P(95\%)$): 
\begin{align}
\mbox{RMSE}&=\left({{\sum^{m}_{j=1}\sum^{n+n^*}_{t=n+1} (\hat y_j(\mathbf x_{t^*})-  y_j(\mathbf x_{t^*}))^2}}\right)^{\frac{1}{2}},\label{equ:nrmse}\\
L(95\%)&=\frac{1}{mn^*}\sum^{m}_{j=1} \sum^{n^*+n}_{t^*=n+1}\mbox{length}\left\{ CI_{j,t^*}(95\%) \right\}, \label{equ:length}\\
P(95\%)&=\frac{1}{mn^*}\sum^{m}_{j=1} \sum^{n^*+n}_{t^*=n+1}1_{y_j(\mathbf x_{t^*}) \in CI_{j,t^*}(95\%)} \label{equ:coverage},
\end{align}
where $\hat y_j(\mathbf x_{t^*})$ is the prediction of the output at coordinate $j$ with input $\mathbf x_{t^*}$, $CI_{j,t^*}(95\%)$ is the $95\%$ predictive interval of the output at coordinate $j$ and time $t^*$,  $\mbox{length}\left\{ CI_{j,t^*}(95\%)\right\}$ denotes the length of the predictive interval, and $\bar y=\sum^{m}_{j=1}\sum^{n}_{t=1} y_j(\mathbf x_{t})/{(mn)}$ is the mean of the observations in the training data set. An accurate method should have small predictive error quantified by RMSE, short average length of $95\%$ predictive interval ($L(95\%)$) and the proportion of the sample covered by the $95\%$ predictive interval ($P(95\%)$) should be close to the $95\%$ nominal level.

\label{sec:numeric}
\subsection{Lorenz 96 system}
We first discuss the Lorenz 96 system for modeling the atmospheric quantities at equally spaced locations along a cycle \cite{lorenz1996predictability}: 
\begin{equation}
\frac{d y_j(t)}{dt}=(y_{j+1}(t)-y_{j-2}(t))y_{j-1}(t)-y_{j}(t)+F,
\end{equation}
for $j=1,...,m$, where $m=40$ and $F=8$ are typically used for testing. Here $f_j(\mathbf x_t)=(y_{j+1}(t)-y_{j-2}(t))y_{j-1}(t)-y_{j}(t)$, where the 4 dimensional input is $\mathbf x_t=\{y_{j-2}(t), y_{j-1}(t),y_j(t),y_{j+1}(t)\}$. 
The Lorenz 96 system is often used for demonstrating the effectiveness of nonlinear filtering approaches, such as ensemble Kalman filter in data assimilation \cite{evensen2009data},  where the function $f_j(\cdot)$ is typically assumed to be known. 
Here we assume the underlying dynamics from $f_j(\cdot)$ is unknown. 

We test a few methods and compare their performance on uncertainty quantification. We assume both the derivative and output values are available. The data are obtained by the Runge Kutta method of order 4 with step size $h=0.01$ for $1000$ steps. The initial values of the states are sampled from zero mean multivariate normal distribution where covariance matrix is sampled from a Wishart distribution with the scale matrix being identity and $m$ degrees of freedom \cite{roth2017ensemble}. Any method can use the  $mn=4,000$ observations from the first $n=100$ time points as training observations, whereas the rest of the $36,000$ observations at later $n^*=900$ time points are held out as test data. For DMD and HODMD, we try both the observed output values and derivatives for forecasting. Since both ways do not work well, we only present results based on the observed output values. When constructing the data matrices for HODMD, 6 observed snapshots ($q=6$) are concatenated and 3 time points are skipped in the augmented data ($\Delta t =3$). For PP-GP, we uniformly subsample $n_{training}=500$ observations from $4,000$ observations of derivatives in the training time period to estimate the parameters and construct predictive distributions in Eq.~(\ref{equ:pred_distr}), because of the high computational cost when $n$ is large. We use the default product Mat{\'e}rn covariance function with roughness parameter being 2.5 in PP-GP. 
As PP-GP can be considered as an extended version of DMD on projecting the data onto the kernel space discussed in Section \ref{sec:edmd}, we do not include any other EDMD approach. The range parameters of the kernel in PP-GP are estimated by the default marginal posterior mode estimation \cite{gu2018robustgasp}, which is more flexible than assuming fixed nonlinear basis function in EDMD. 

\begin{figure}[t]
\centering
\includegraphics[scale=0.52]{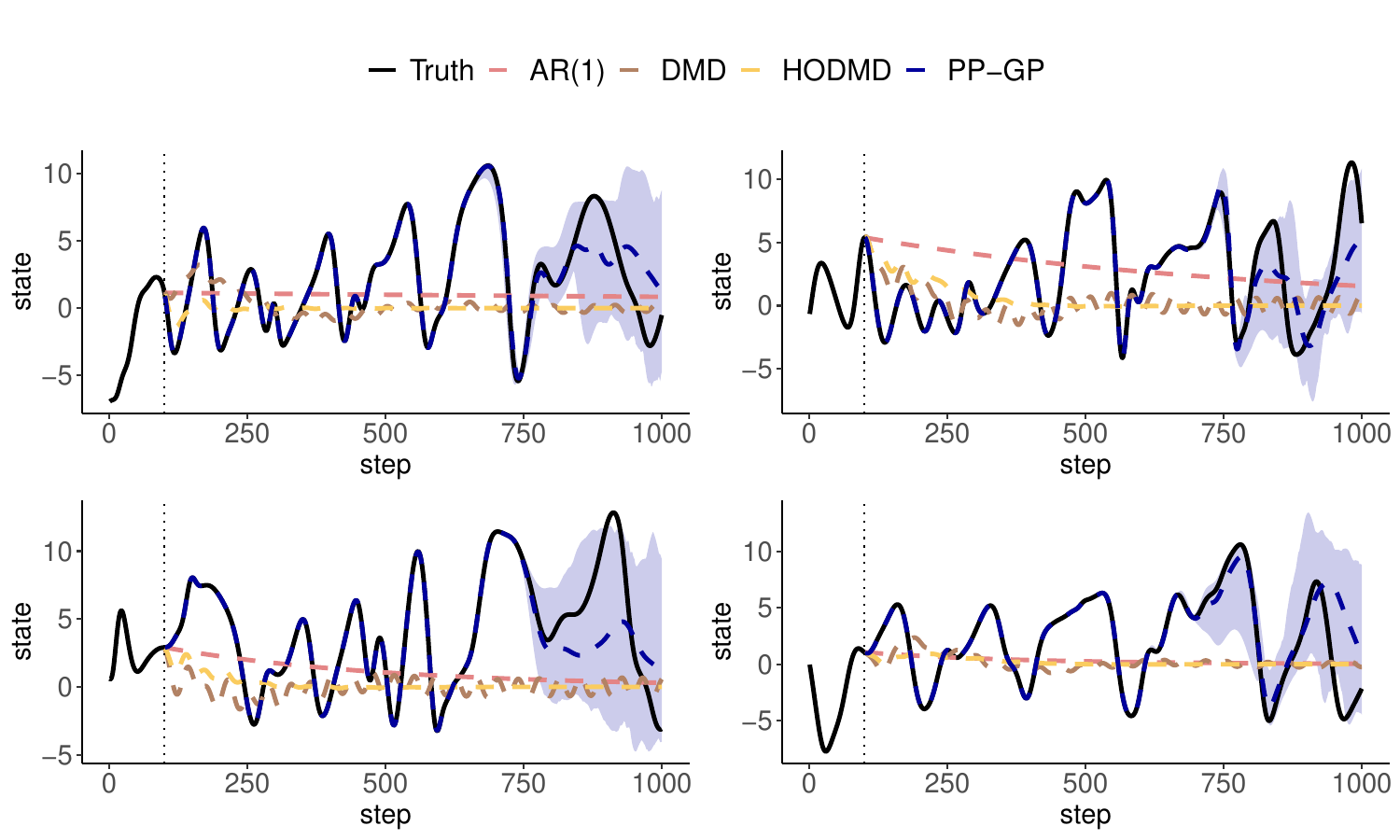}
\caption{ 
Forecast of the Lorenz 96 systems for 900 steps by {AR(1) (pink dashed curves)}, DMD (brown dashed curves), HODMD (yellow dashed curves) and PP-GP (blue dashed curve). The $95\%$ predictive interval by PP-GP is graphed as blue shaded area. The blue dashed curves (PP-GP) and black curves (held-out truth) overlap for around the first 500 held-out time steps.}
\label{fig:eg_1_900_pred}
\end{figure}

\begin{figure}[t]
\centering
\includegraphics[scale=0.65]{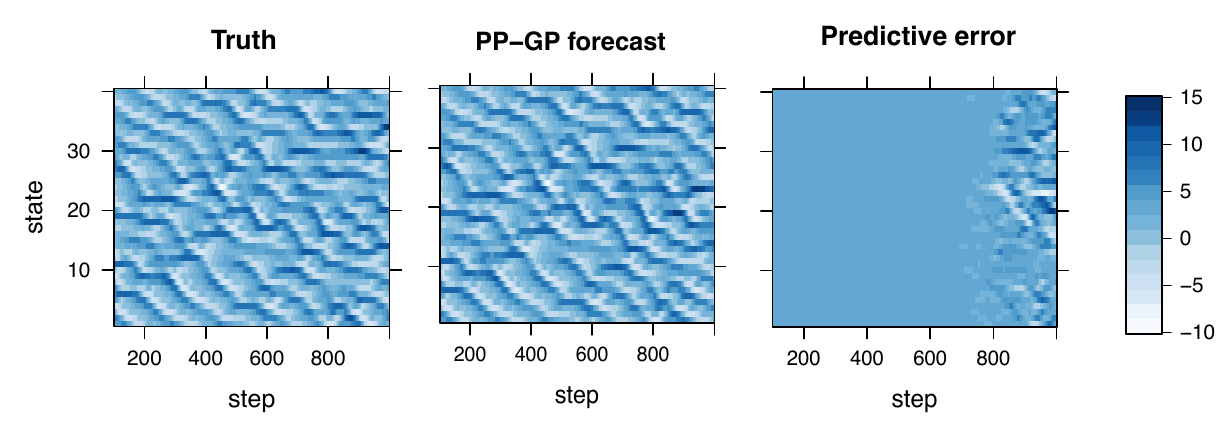}
\caption{The truth, 900-step forecast by PP-GP, and their difference for the Lorenz 96 system. }
\label{fig:eg1_true_ppgp_diff}
\end{figure}

\begin{table}
\centering
\begin{tabular}{llll}
\hline
 500-step forecast &RMSE & $P(95\%)$ & $L(95\%)$ \\ 
\hline
{AR(1)} &{4.60} & {69.8\%} & {10.8}\\
DMD   & 4.51 & {91.6\%} & {20.2}\\ 
HODMD & 4.24 & {98.8\%} & {39.4}  \\  
PP-GP & $0.0126$ & $93.9\%$ & $0.0352$ \\ 
 \hline
900-step forecast &RMSE & $P(95\%)$ & $L(95\%)$ \\ 
\hline
{AR(1)} & {4.63} & {75.4\%} & {12.6} \\
DMD  & 4.55  & {93.5\%} & {21.9} \\ 
HODMD & 4.37 & {99.3\%} & {43.6}  \\  
PP-GP & $1.52$ & $94.6\%$& $2.64$\\ 
 \hline
\end{tabular}
\caption{Forecast accuracy and uncertainty assessment on the held-out data. The standard deviation is $3.54$ and $3.62$ for the 500-step test data and 900-step test data, respectively. 
}
\label{tab:eg1_table}
\end{table}

\begin{figure}[t]
\centering
\includegraphics[scale=0.65]{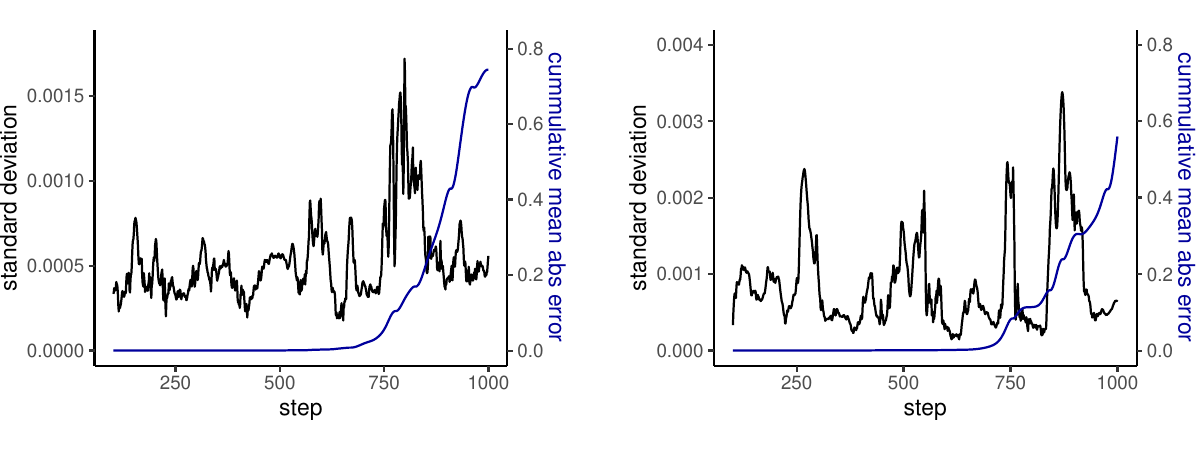}
\caption{The predictive standard deviation from the PP-GP model at each step and cumulative mean absolute error $SE_j(t^*)=\sum^{t^*}_{s^*=n+1}|\hat y_j(s^*)-y_j(s^*)|/(t^*-n)$ of 900-step forecast of state $j=1$ and $j=21$ for $t^*=101,...,1000$ and $n=100$. }
\label{fig:se_cumsum}
\end{figure}

Fig. ~\ref{fig:eg_1_900_pred} gives the 900-step forecast by {AR(1),} DMD, HODMD and PP-GP for the $10$th $20$th, $30$th and $40$th states. The uncertainty of the forecast by PP-GP is graphed as the blue shared area in all plots. 
With the default kernel function and estimation \cite{gu2018robustgasp}, the forecast of PP-GP is reasonably accurate for the first 500 time steps and $95\%$ predictive interval by PP-GP (graphed as the blued shaded area) is almost indistinguishable {in this time domain}. {As the average Lyapunov time for our simulation is around 1.58, the PP-GP can make precise forecast for more than 3 Lyapunov time}. 
The $95\%$ predictive interval becomes noticeably wider at later time steps, and simultaneously, the difference between PP-GP and held-out truth becomes large. The internal uncertainty assessment {quantified by the 95\% predictive interval of} PP-GP  provides a time range of reliable forecasts without knowing the held-out truth. Furthermore, Fig.~\ref{fig:eg1_true_ppgp_diff} compares the truth to the forecast by  PP-GP for all states, which confirms the forecast by PP-GP for the first 500 steps is accurate for all states.

Table \ref{tab:eg1_table} summarizes the performance of forecast and uncertainty assessment by different approaches for the Lorenz 96 system. The  RMSE by the PP-GP for the first 500 steps is much smaller than the standard deviation of the test data and the length of the $95\%$ predictive interval by PP-GP  is also substantially smaller than the variability in the held-out observations. Even if the $95\%$ predictive interval by PP-GP is short, it covers $93.9\%$ of the observations, indicating the uncertainty of the forecast is properly quantified. The predictive error by PP-GP becomes large at later steps due to the accumulation of the approximation error, and the overall predictive error of the 900-step forecast is dominated by the large error at later time points. Note that the length of the $95\%$ predictive interval from PP-GP increases automatically in PP-GP, enabling $94.6\%$ of the held-out data to be covered by the $95\%$ predictive interval. In comparison, {although the proportion of the samples covered by the $95\%$ predictive interval by DMD and HODMD is also close to the $95\%$ nominal level, the average length of intervals is substantially {larger}}, as the underlying dynamics cannot be written as a linear combination of previous outputs. It is worth mentioning that the uncertainty of DMD and HODMD is affected by the selected rank to represent the data, here chosen to be the smallest value such that the summation of the eigenvalues explain at least $99\%$ of the variability. A principled way to model the noise and select the rank may improve the uncertainty assessment of these methods.

Fig.~\ref{fig:se_cumsum} presents the predictive standard deviation of PP-GP  and the corresponding cumulative mean absolute error between the true values and PP-GP forecasts, for the 1st and 21st states. In the initial 500-step forecasts, both the standard deviation computed by Eq.~(\ref{equ:pred_distr}) and the cumulative mean absolute error are relatively small. 
As the number of forecast steps increases, the cumulative mean absolute error increases. The 95\% predictive interval in Fig.~\ref{fig:eg_1_900_pred} can be used to quantify the time when the forecast becomes inaccurate. 

The uncertainty assessment by the PP-GP model is reasonably accurate as we convert the challenging problem of forecasting chaotic systems to the problem of predicting the one-step-ahead function in a 4-dimensional input space. 
In practice, reducing the inputs to a low-dimensional space is helpful for producing reliable forecasts and uncertainty assessments. 

\subsection{Time-dependent Green's function}

{In the second example, we apply the data-driven approach to forecast the simulation of the nonequilibrium dynamics of interacting electrons in materials exposed to intense time-varying electric fields, which is one of the most challenging problems in condensed matter physics. Modeling nonequilibrium dynamics from the basic principles of quantum mechanics is exceptionally challenging in computation, and has only been extended to the \textit{ab initio} simulation of real materials in the last decade through the development of a new time-domain approach based on a formalism of Keldysh, Kadanoff and Baym \cite{KB1962,Keldysh1965,Stefanucci2013} for the dynamics of the nonequilibrium Green's function~\cite{Attaccalite2011,chan2021giantexciton,Perfetto2022}. We refer to the new computational approach as the time-dependent adiabatic GW (TD-aGW) method, where G stands for the interacting single-particle Green's function and W stands for the screened Coulomb interaction. Details about our implementation of the method can be found in~\cite{chan2021giantexciton}, which is built on approximations developed in~\cite{Attaccalite2011}. In principal, the TD-aGW approach gives quantitatively accurate descriptions of materials' response to electric fields, allowing for the simulation of experiments with femtosecond resolution and the development of new classes of quantum materials whose properties can be switched with laser pulses~\cite{Perfetto2016,Perfetto2020,Chan2023}. However, the scaling with respect to the size of the problem is computationally prohibitive, and thus, the TD-aGW method is currently limited to systems containing a few atoms and for short timescales.}

{In the context of many-body perturbation theory and second quantization, the nonequilibrium Green's function is a two-point correlation function comprised of the creation and annihilation field operators that increase and decrease the number of particles in a given state respectively. In general, it is a function of two time variables, but following the work in~\cite{Attaccalite2011,chan2021giantexciton}, the change in the energy of an electron due interactions with its environment (i.e. the electron self energy) can be approximated as the equilibrium self energy plus a static nonequilibrium contribution. In this approximation, the key equation that we solve in TD-aGW is an equation of motion for a matrix of single-particle density $\bm \rho(t)$ with index $(m_1m_2,\textbf{k})$
\begin{equation}
    i\hbar \frac{\partial}{\partial t} \rho_{m_1 m_2,\mathbf{k}} (t) = [\mathbf H(t), \bm \rho(t)]_{m_1m_2,\mathbf{k}},
    \label{eq:tdagw}
\end{equation}
where $\mathbf H(t)$ is a matrix of the Hamiltonian of the system having the same size of $\bm \rho(t)$,  the square bracket denotes the commutator of two matrices $\mathbf A$ and $\mathbf B$ such that $[\mathbf A,\mathbf B]=\mathbf A\mathbf B-\mathbf B\mathbf A$, and $\hbar$ is the Planck constant.
The particle density matrix is written in a basis of electron quantum states with a band index $m$ and a wave vector (or crystal momentum) $\textbf{k}$. Hence, the matrix element $\rho_{m_1 m_2,\textbf{k}}(t)$ encodes the entanglement of a state $(m_1,\textbf{k})$ with another state $(m_2,\textbf{k})$. 
 The Hamiltonian of the system is calculated as
\begin{equation}
    \label{eq:ham}
    \mathbf H(t) = \mathbf H_0 - e\mathbf{E}(t) \cdot {\mathbf{r}} + \bm \Sigma^{GW} + \delta\bm \Sigma (t).
\end{equation}
Here, $\mathbf H_0$ is the system's mean-field Hamiltonian at equilibrium; $\bm \Sigma^{GW} $ is the electron self-energy at equilibrium computed within the GW approximation; and $\delta \bm \Sigma$ is the nonequilibrium correction to the electron self-energy. $e\mathbf{E}(t) \cdot {\mathbf{r}}$ describes the coupling of the system with the external electric field, such that $\textbf{E}(t)$ is a spatially-uniform, time-varying electric field; $\textbf{r}$ is the position operator in quantum mechanics; and $e$ is the fundamental charge of the electron. 
$\mathbf H_0$ and $\bm \Sigma^{GW}$ describe equilibrium properties and thus have no time-dependence. $\delta \bm\Sigma (t)$ is a functional of $\bm \rho(t)$:
\begin{equation}
    \label{eq:sigma}
        \delta  \Sigma_{m_1m_2,\mathbf k} (t) = \sum
_{m'_1,m'_2,\mathbf{k'}} \rho_{m_1 m_2,\mathbf{k}-\mathbf{k'}}(t) W_{m_1m_1'm_2m_2',\mathbf{k}-\mathbf{k'}},
\end{equation}
where $W_{m_1m'_1m_2m'_2,\mathbf k-\mathbf{k}'}$ is the equilibrium screened Coulomb interaction computed in the random-phase approximation (RPA); $m_1$, $m'_1$, $m_2$, $m'_2$ are band indices, and $\mathbf{k}$ and $\mathbf{k}'$ are vectors in reciprocal space.
}

We use monolayer MoS\textsubscript{2}--a material where the electron self energy is known to be large~\cite{Qiu2013,Qiu2016,Ugeda2014,Chernikov2014,Wang2018}--as our test system. In order to perform the time evolution in Eq.~(\ref{eq:tdagw}), we first need to compute the equilibrium solution before the electric field is turned on to obtain $\bm \rho(t=0)$, as well as the time-independent matrices $\mathbf H_0$ and $\bm \Sigma^{GW}$ in Eq.~(\ref{eq:tdagw}) and $\mathbf W$ in Eq.~(\ref{eq:sigma}). We do this within the one-shot GW approximation~\cite{Hybertsen1986,Rohlfing2000,Deslippe2012} using the following calculation parameters. 
Our basis includes 4 bands (the top 2 valence bands and the bottom 2 conduction bands) and a uniform grid of $36\times 36\times 1$ \textbf{k}-points in reciprocal space. Density functional theory (DFT) calculations with spin-orbit coupling are performed using the Quantum Espresso package \cite{Giannozzi2009}. We use norm-conserving fully relativistic PBE pseudopotentials from the SG15 ONCV potential library \cite{Scherpelz2016} and a plane wave cutoff energy of 80 Ry. A GW plus Bethe Salpeter equation (BSE) calculation is done as a one-shot calculation on top DFT using the BerkeleyGW package \cite{Deslippe2012}. A dielectric cutoff of 10 Ry and 6000 bands are used in the GW and BSE calculations.  We use the results of the equilibrium calculations to setup Eq.~(\ref{eq:tdagw}) at time t=0, and then we propagate the equation in time with an external electric field that mimics a laser pulse polarized along the $r_1$ direction. The field is ${E}_{r_1}(t)= A\sin\left(\frac{\pi t}{T}\right)^{2}\sin\left(\omega t\right)$, where the constants (in Ryd) $A=0.006$ is the amplitude of the electric field, $\omega = 0.022$ is the frequency of the light, and $T=160$ fs is the duration of the light pulse, all in Rydberg atomic units. ${E}_{r_2}(t)=0$ and ${E}_{r_3}(t)=0$. The electric field parameters are values consistent with typical experimental setups in high harmonic generation (HHG) experiments \cite{liu2017hhg}. The 4\textsuperscript{th} order Runge-Kutta method is then used to update $\rho_{m_1 m_2,\mathbf k}(t)$ and $\delta\Sigma (t)$ at each time step, and the matrix element $\rho_{m_1 m_2,\mathbf k}(t)$ is saved for the single \textbf{k}-point, $\textbf{k}=\textbf{0}$, to be used as the test and training data for our data-driven models.

Our focus in this example is on forecasting the real part of the density matrix from Eq.~(\ref{eq:tdagw}), which in turn is related to the two-time Green's function through the Generalized Kadanoff-Baym ansatz (GKBA)~\cite{Lipavsky1986,Stefanucci2013}. Each snapshot is a 16-dimensional vector:  $\mathbf y_t=\mbox{Vec}(\mathbf \rho_{\mathbf k}(t))$, where $\mathbf \rho_{\mathbf k}(t)$ is a $4\times 4$ matrix with the $(m_1,m_2)$th entry being $ \rho_{m_1 m_2,\mathbf k}(t)$ for $m_1=1,...,4$ and $m_2=1,...,4$, and $\mathbf k=\mathbf 0$. 
 To solve Eq.~(\ref{eq:tdagw}), one needs to compute $\mathbf{E}(t)$, which is a known analytic function, and $\delta \bm \Sigma (t)$, which depends on the density matrix $\rho_{m_1 m_2,\mathbf k-\mathbf k'}(t)$  at other reciprocal lattice vectors $\mathbf k\neq \mathbf 0$. To evaluate the performance of that data-driven approaches,  we only utilize the observations $\mathbf y_t$ to construct the model, which only contains the local information at $\mathbf k=\mathbf 0$, whereas the external field $\mathbf E(t)$ and interactions terms $\delta \bm \Sigma(t)$ from other lesser Green's function $G_{m_1 m_2,\mathbf k-\mathbf k'}^{<}(t)$ are not used. 
 For all methods, we test two scenarios with training time steps being 2500 and 3500, respectively. For {AR(1) and} DMD, all training data are used. For HODMD,  we use the same setting as the previous example with $q=6$ and $\Delta t=3$. 
 For PP-GP, we use an isotropic kernel and 800 pairs of observations, uniformly sampled from the training data for constructing the model, and the output vector of $m=m_1m_2=16$ dimensions from the previous time point is used as the input for predicting the one-step-ahead transition function.

\begin{table}[t]
\centering
\begin{tabular}{llll}
\hline 
1000-step forecast &RMSE & $P(95\%)$ & $L(95\%)$ \\ 
\hline
{AR(1)} & {$6.20 \times 10^{-6}$}  & {23.7\%} & {$3.97 \times 10^{-6}$} \\
DMD   & $2.87 \times 10^{-5}$  & {$8.65\%$}  & {$2.03\times 10^{-6}$} \\ 
HODMD & $1.30 \times 10^{-5}$ & {$9.52\%$}  & {$6.56\times 10^{-7}$} \\  
PP-GP &$3.53 \times 10^{-6}$  & $65.2\%$ & $2.98 \times 10^{-6}$  \\ 
\hline
2000-step forecast &RMSE & $P(95\%)$ & $L(95\%)$ \\ 
\hline
{AR(1)} & {$6.23 \times 10^{-6}$}  & {35.8\%} & {$5.57 \times 10^{-6}$} \\
DMD  & $2.01\times 10^{-3}$  & {$4.35\%$} & {$7.29 \times 10^{-5}$} \\ 
HODMD & $6.22\times 10^{-5}$ & {9.22\%} & {$1.94 \times 10^{-6}$} \\  
PP-GP & $3.42 \times 10^{-6}$ & $75.3\%$ & $3.97 \times 10^{-6}$ \\ 
\hline
\end{tabular}
\caption{Forecast and uncertainty assessment  for the density matrix  using the held-out data. 
Observations at the first 2500 time steps are used to fit the model. The standard deviation is $1.56 \times 10^{-5}$ and  $1.48 \times 10^{-5}$ of the 1000-step test data and  the 2000-step test data, respectively.}
\label{tab:eg2_training_2500}
\end{table}

\begin{table}[t]
\centering
\begin{tabular}{llll}
\hline
1000-step forecast &RMSE & $P(95\%)$ & $L(95\%)$ \\ 
\hline
{AR(1)} & {$1.13  \times 10^{-5}$} & {31.7\%} & {$7.12 \times 10^{-6}$} \\
DMD   & $6.00\times 10^{-6}$ & {$15.9\%$} & {$1.00 \times 10^{-6}$} \\ 
HODMD & $9.29 \times 10^{-6}$ & {$29.6\%$ }& {$1.05 \times 10^{-6}$} \\  
PP-GP & $3.93 \times 10^{-6}$ & $81.7\%$ & $1.89 \times 10^{-5}$\\ 
\hline
2000-step forecast &RMSE & $P(95\%)$ & $L(95\%)$ \\ 
\hline
{AR(1)} & {$1.50 \times 10^{-5}$} & {28.5\%} & {$9.93\times 10^{-6}$} \\
DMD  & $8.83 \times 10^{-6}$ & {$16.3\%$} & {$2.98 \times 10^{-6}$} \\ 
HODMD & $2.24 \times 10^{-5}$ & {$20.2\%$} & {$1.62 \times 10^{-6}$} \\  
PP-GP & $9.13\times 10^{-6}$  & $88.8\%$ & $4.36 \times 10^{-5}$\\ 
\hline
\end{tabular}
\caption{Forecast and uncertainty assessment of the density using the held-out data.  Observations from the first 3500-time steps are used to fit the model. The standard deviation is $1.40 \times 10^{-5}$ and $1.06 \times 10^{-5}$ of the 1000-step test data and the 2000-step test data, respectively.}
\label{tab:eg2_training_3500}
\end{table}

{Table \ref{tab:eg2_training_2500} and  Table \ref{tab:eg2_training_3500} provide the performance of each method when using observations from the first 2500 time steps and first 3500 time step as the training data, respectively.}
The PP-GP has the smallest RMSE for 2000-step forecast among three approaches in both scenarios, smallest RMSE for 1000-step forecast in the first scenario.

 {The misspecification of the input variables, however,  degrades the accuracy of predictions and uncertainty quantification.  The predictive error differs when using two output regimes as the training data, as the trend is different, and neither represents the trend in the held-out test data. The coverage of held-out data by the $95\%$ predictive interval from the PP-GP is the highest among all methods, and having observations from a longer training time period seems to improve the overall coverage of the held-out data. 
}

 \begin{figure}[t]
\centering
\includegraphics[scale=0.8]{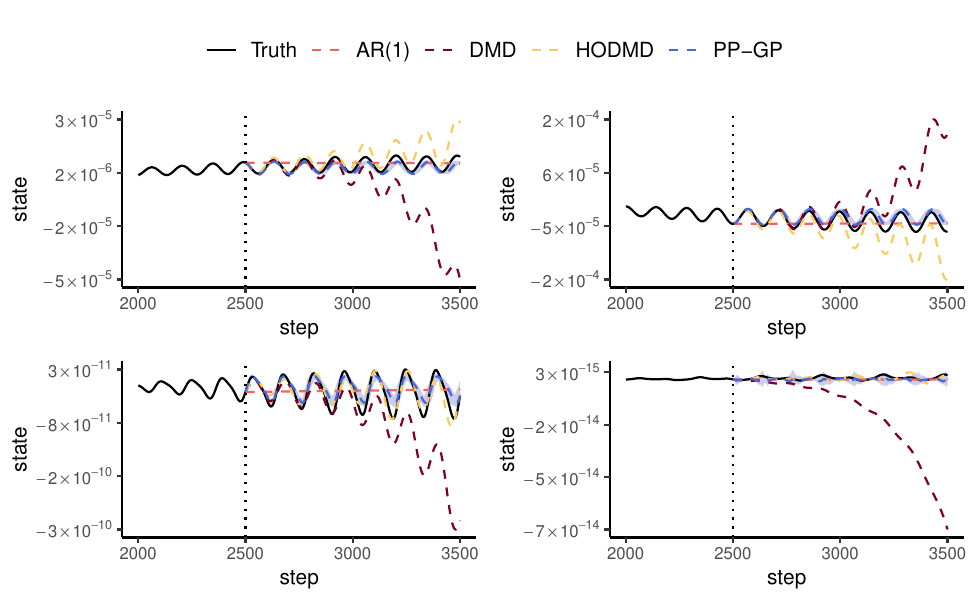}
\caption{{Forecast of two time Green's function for 1000 steps by AR(1) (orange dashed curves), DMD (brown dashed curves), HODMD (yellow dashed curves) and PP-GP (blue dashed curve). 2500-time steps are used as training data. The $95\%$ predictive interval by PP-GP is graphed as blue shaded area.} }
\label{fig:eg_2_pred_n2500}
\end{figure}

\begin{figure}[t]
\centering
\includegraphics[scale=0.8]{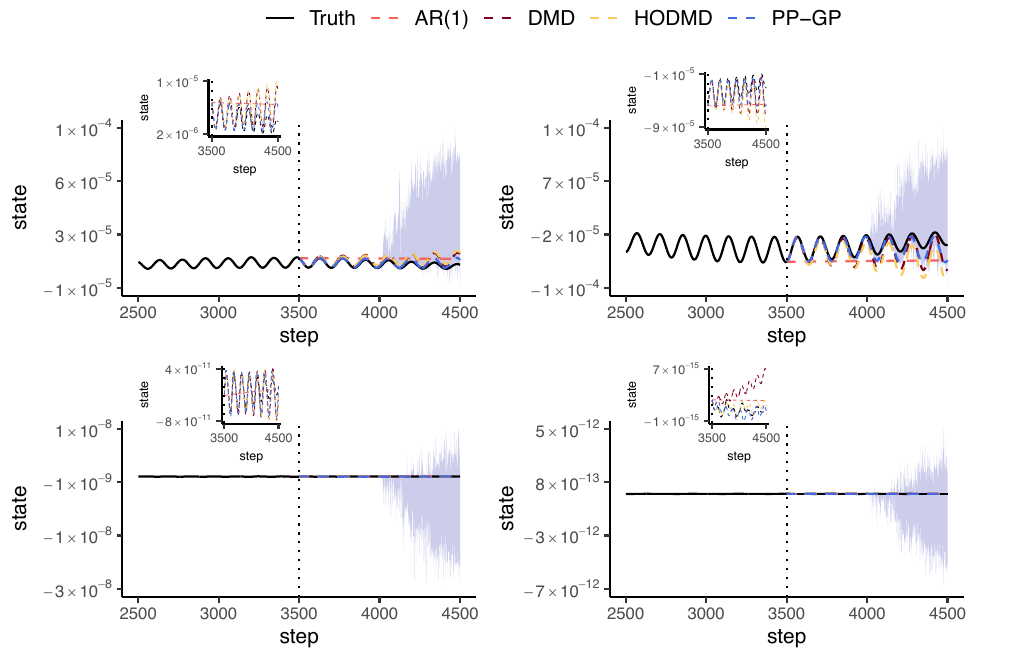}
\caption{Forecast of two time Green's function for 1000 steps by {AR(1) (orange dashed curves)}, DMD (brown dashed curves), HODMD (yellow dashed curves) and PP-GP (blue dashed curve). 3500-time steps are used as training data. The $95\%$ predictive interval by PP-GP is graphed as blue shaded area. }
\label{fig:eg_2_pred_n3500}
\end{figure}

{Fig.~\ref{fig:eg_2_pred_n2500} and Fig.~\ref{fig:eg_2_pred_n3500}} display the 1000-step forecast by {AR(1),} DMD, HODMD and PP-GP model using {2500 and} 3500 timesteps as training data, {,respectively}. 
The PP-GP model can make accurate prediction for the first few cycles with short predictive intervals. The prediction error accumulates, and the model automatically detects the inaccuracy of the prediction, leading to large $95\%$ predictive intervals at later time points. Note that the scale of the held-out truth is decreasing and the overall trend is not captured by any method. This is because for all {four} methods, some inputs such as $\mathbf H(t)$ and $\bm \rho(t)$ at  $\mathbf k\neq \mathbf 0$, are assumed to be unknown. 
The large predictive intervals from PP-GP indicate substantial differences between the output in the training and forecast period, signaling more information is required to obtain an accurate prediction.

\section{Concluding remarks}
\label{sec:conclusion}
  Quantifying the uncertainty for forecast and extrapolation by data-driven models is a challenging task that was not well-studied. We showed popular approaches for representing dynamical systems, such as the dynamic mode decomposition, can be written as the maximum likelihood estimator of a linear mapping matrix in a linear state space model, and this {generative} model allows the uncertainty to be quantified of forecast rigorously. We also extended the parallel partial Gaussian process approach to emulate the one-step-ahead transition function that links observations at two nearby time frames, and propagated the uncertainty through posterior sampling for forecasting a longer time. 
  We numerically compared different approaches with correctly specified inputs and misspecified inputs in two examples.  
  We discussed scenarios where the uncertainty can be reliably quantified, and analyzed the factors that can degrade the accuracy of uncertainty assessment. 

  There is a wide range of open issues to obtain reliable uncertainty quantification for probabilistic forecast of nonlinear dynamical systems. First, restrictive model assumptions, such as equal variance between output coordinates,  subjective choice of latent dimensions and lack of models of trends from the forecast period, can degrade the accuracy of uncertainty assessment for forecasting. Having a {probabilistic generative} model allows one to better understand the model assumptions and hence select data-driven models more suitable for real-world tasks.   Second, the kernel representation of vector functions, such as the PP-GP model, can capture nonlinear behaviors of dynamical systems through modeling the one-step-ahead transition function, whereas the Markov assumption of the model can be restrictive. Having inputs from longer time lag period may improve the model performance. 
  Furthermore, when the dimension of input is large, we need to develop a computationally scalable way to reduce the input dimension and form a suitable distance metric between the reduced inputs. 
 Finally,  filtering approaches may be used along with the data-driven predictions of one-step-ahead transition function, when the observations contain nonnegligible noises.

\section*{Appendix: Estimation of  range and nugget parameters in PP-GP}
\label{sec:appendix} 
The range and nugget parameters of PP-GP model can be estimated from mode estimator, such as the maximum likelihood estimator (MLE) or maximum marginal posterior estimator (MMPE). The MLE can be unstable for estimating these parameters when the sample size is small. Transforming the range parameters to define the inverse range parameter $\beta_l=1/\gamma_l$, we use the MMPE for estimating the inverse range and nugget parameters   (\cite{gu2016parallel}): 
\begin{equation}
(\bm {\hat \beta},\hat \eta)=\mbox{argmax}_{\bm \beta, \eta} \left\{{\log}(\mathcal L(\bm \beta, \eta))+\log(\pi(\bm \beta, \eta))\right\}. 
\label{equ:mmpe}
\end{equation}
Here the logarithm of the marginal likelihood after integrating out the mean and variance is 
\begin{equation}
{\log}(\mathcal L(\bm \beta, \eta))= c_1 -\frac{m}{2}\log|\mathbf {\tilde K}|-\frac{m}{2}\log|\mathbf 1^T_n \mathbf{\tilde K}^{-1} \mathbf 1_n|-\left(\frac{n-1}{2} \right)\sum^{m}_{j=1}\log \left( S^2_j \right),
\end{equation}
where $ S^2_j=(\mathbf y_j-\hat \mu_j\mathbf 1_n)^T \mathbf{\tilde K}^{-1}  (\mathbf y_j-\hat \mu_j\mathbf 1_n)$ and $c_1$ is a normalizing constant not related to $(\bm \gamma, \eta)$. The jointly robust prior \cite{gu2018jointly} is used as a default choice for prior  in the RobustGaSP package \cite{gu2018robustgasp}
\begin{equation}
\log(\pi(\bm \beta, \eta))= c_2+ a \log(\sum^{\tilde p}_{l=1}C_l \beta_{l}+\eta) - b\left(\sum^{\tilde p}_{l=1}C_l \beta_{l}+\eta\right),
\end{equation}
where $c_2$ is a normalizing constant not relevant to $(\bm \beta, \eta)$ and the default choice of prior parameters in the RobustGaSP package is $a=0.2$, $b=n^{-1/\tilde p}(a+\tilde p)$ and $C_l=n^{-1/\tilde p}|x^{max}_l-x^{min}_l|$ with $x^{max}_l$ and $x^{min}_l$ being the largest and lowest input values in the $l$th coordinate, respectively. For deterministic output values, including the cases where numerical error of the simulations is negligible, the nugget $\eta$ may be set to be zero. 

We transform the estimated inverse range parameters back to get $\hat \gamma_l=1/\hat \beta_l$, for $l=1,...,\tilde p$ and   compute  the predictive distribution after integrating the $m$ mean parameters and $m$ variance parameters
\begin{align*}
&p\left(y_j\left(\mathbf x_{t^*}\right) \mid \mathbf X, \mathbf Y, \mathbf x_{t^*}, \bm {\hat \gamma}, \hat \eta\right) \\
&\quad\quad =\int p(y_j\left(\mathbf x_{t^*}\right) \mid \mathbf X, \mathbf Y, \mathbf x_{t^*}, \bm {\hat \gamma}, \hat \eta,\bm \mu, \sigma^2)\pi(\bm \mu, \sigma^2) d\bm \mu d \sigma^2, 
\end{align*}
where  the reference prior of  mean and variance parameters follows  $\pi(\bm \mu, \sigma^2)\propto 1/\prod^{m}_{j=1}\sigma^2_j$,  and $p(y_j\left(\mathbf x_{t^*}\right) \mid \mathbf X, \mathbf Y, \mathbf x_{t^*}, \bm {\hat \gamma}, \hat \eta,\bm \mu, \sigma^2)$ is the conditional distribution of the output at $\mathbf x_{t^*}$ at coordinate $j$. With the assumption the outputs are independent across different coordinates, 
the  predictive distribution $p\left(y_j\left(\mathbf x_{t^*}\right) \mid \mathbf X, \mathbf Y, \mathbf x_{t^*}, \bm {\hat \gamma}, \hat \eta\right)$ follows a Student's distribution in Eq.  (\ref{equ:pred_distr}).

\section*{Acknowledgement}
This research is supported by the National Science Foundation under Award No. 2053423. Yizi Lin acknowledge the support from the support of the UC Multicampus Research Programs and Initiatives (MRPI) program under Grant No.
M23PL5990. Diana Qiu and Victor Chang Lee were supported by the National Science Foundation (NSF) Condensed Matter and Materials Theory (CMMT) program under Grant DMR-2114081. Development of the td-aGW code was supported by Center for Computational Study of Excited-State Phenomena in Energy Materials (C2SEPEM) at the Lawrence Berkeley National Laboratory, funded by the U.S. Department of Energy, Office of Science, Basic Energy Sciences, Materials Sciences and Engineering Division, under Contract No. DE-C02-05CH11231. The calculations used resources of the National Energy Research Scientific Computing (NERSC), a DOE Office of Science User Facility operated under Contract DE-AC02-05CH11231; Anvil at Purdue University through allocation PHY230053 from the Advanced Cyberinfrastructure Coordination Ecosystem: Services \& Support (ACCESS) program, which is supported by National Science Foundation grants \#2138259, \#2138286, \#2138307, \#2137603, and \#2138296~\cite{ACCESS}; and the INCITE program at the Oak Ridge Leadership Computing Facility, which is a DOE Office of Science User Facility supported under Contract DE-AC05-00OR22725. 

 \bibliographystyle{elsarticle-num} 
 \bibliography{References_chronical_2022}

\end{document}